\newtheorem{thm}{Theorem~}
\newtheorem{lem}{Lemma~}
\newtheorem{defn}{Definition~}
\renewcommand{\paragraph}[1]{\smallskip \noindent {\textsc{#1}}}
\def\ie{\textit{i.e.}\xspace}
\DeclareMathOperator{\EMST}{EMST}
\begin{document}
\title{Measuring Transport Difficulty of Data Dissemination in \\ Large-Scale Online Social Networks:\\ An Interest-Driven Case}
\numberofauthors{1}
\author{
\alignauthor
Cheng Wang, ~Zhenzhen Zhang, ~Jieren Zhou,  ~Yuan He, ~Jipeng Cui, ~Changjun Jiang\\
       \affaddr{Department of Computer Science and Technology}\\
       \affaddr{Tongji University, Shanghai, China}\\
       \email{chengwang@tongji.edu.cn}
}

\maketitle
\begin{abstract}
In this paper, we aim to model the formation of data dissemination in online social networks (OSNs), and measure the transport difficulty of generated data traffic.
We focus on a usual type of interest-driven social sessions in OSNs, called \emph{Social-InterestCast}, under which a user will autonomously determine whether to view the content from his followees depending on his interest.
It is challenging to figure out the formation mechanism
of such a  Social-InterestCast, since it involves multiple interrelated factors such as users' social relationships, users' interests, and content semantics.
We propose a four-layered system model, consisting of physical layer, social layer, content layer, and session layer.
By this model we successfully obtain the geographical distribution of Social-InterestCast sessions, serving as the precondition for quantifying data transport difficulty.
We define the fundamental limit of  \emph{transport load} as  a new metric, called \emph{transport complexity}, i.e., the \emph{minimum required} transport load for an OSN over a given carrier network.
Specifically, we derive the transport complexity
for Social-InterestCast sessions in a large-scale OSN over the carrier network with optimal communication architecture.
The results can act as the common lower bounds on transport load for Social-InterestCast over any carrier networks.
To the best of our knowledge, this is the first work to measure the transport difficulty for data dissemination in OSNs by modeling session patterns with the interest-driven characteristics.
\end{abstract}

\keywords{Online social networks, data dissemination, transport load, fundamental limits,  scaling behavior}


\maketitle

\section{Introduction}\label{section-Introduction}

As social networking services are becoming increasingly popular, online social networks (OSNs) play a growing role in individuals' daily lives.
The user population of OSNs has grown drastically in recent years.
According to the research proceeded by Statista\footnote{One of the largest statistics portals, http://www.statista.com/}
demonstrated that the number of social network users worldwide in 2014 had reached 1.87 billion and estimated that there will be around 2.72 billion social network users around the globe in 2019 \cite{Statista}.
In addition, people are spending more and more time on OSNs. For example, in January 2015, GlobalWebIndex\footnote{A market research firm running world's largest market research study on the digital consumer, https://www.globalwebindex.net/}   showed that the average user spends 1.72 hours per day on social platforms, which represents about 28 percent of all online activity \cite{Globalwebindex}. What's more, various kinds of social applications are constantly emerging, which are rendering an increasing user population of OSNs and providing users with more types of content to choose from, e.g., audios, videos, and pictures.
In addition, OSNs are covering a wider range around the world. All of the factors involved above are resulting in the heavy load imposed on the carrier communication network of OSNs. Furthermore, the load in the OSNs will increase continually with the expansion of OSNs.

Over a long period of time, this growth will give rise to the limitation of Internet's bandwidth, so practically measuring the load imposed by OSNs enjoys a crucial meaning.
To measure such a load, i.e., the load imposed on the carrier communication network by the OSN, we first introduce a metric called \emph{transport load}. It is defined as the product of two key factors: data generating rate at users and transport distance of messages.
The former depends on the format of message (e.g., text, picture, and video, etc.) and the performance request in terms of quality of service (e.g., throughput, latency, and deliver ratio, etc.).
The latter is directly determined by the geographical distribution of dissemination sessions and the communication architecture of network.
Both parameters are critical for measuring the transport capacity of networks. To be specific, it is defined as the product of bits and the transport distance over which the data is successfully transported from the source to the intended destinations.
In this work, we further define the fundamental limit of such transport load as
\emph{transport complexity}, i.e., the \emph{minimum required} transport load for an OSN over a given carrier communication network.
We note that, unlike some classical performance metrics, e.g., network capacity,
the transport complexity is a metric to define the fundamental transport difficulty of a specific data communication applications instead of the transport capability of a given network for specific applications.
Regarding the essential difference between transport capacity of a network and  transport complexity of a data transport application, we give an explicit explanation with the help of the illustration in Figure \ref{fig-capacity-vs-complexity}.

\begin{figure}[t]
\begin{center}
\scalebox{0.6}{\includegraphics{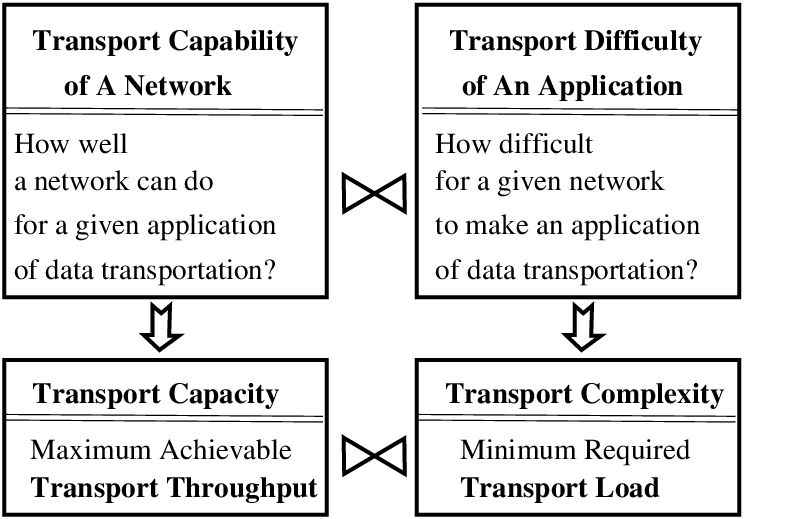}}
\caption{Transport Complexity vs. Transport Capacity. They are the typical and intuitive  metrics of  the transport difficulty of an application and
the transport capability of a network, respectively.
The transport capability is the capability of a network in terms of data transportation, while the transport difficulty is the inherent difficulty of an application in terms of data transportation.
} \label{fig-capacity-vs-complexity}
\end{center}
\vspace{-0.3in}
\end{figure}

For computing the transport load, the first thing is to comb rightly the procedure of a data dissemination in OSNs.
From our perspective, the session can be divided into two successive phases: \emph{Passive Phase} and \emph{Initiative Phase}.
In the passive phase, a source user holding a message sends  the digest of this message to all his followers.  The followers passively receive this digest.
This process of such a dissemination just acts like a \emph{broadcast} in all the followers of this source user.
We call this process \emph{Social-BroadCast}, and sessions generated in this process are straightforward called \emph{Social-BroadCast sessions}.
In the initiative phase, according to the interest, a user will autonomously determine whether to download the complete message  based on the digest of a message from who he follows, \cite{yan2014mining}.
We define such a dissemination process \emph{Social-InterestCast} and sessions generated in this process are \emph{Social-InterestCast sessions}.
In our work, the Social-InterestCast is defined by taking into account specific user interests and their effects on session generation. We give an intuitive explanation as follows:
There exists usually a \emph{user-message mapping} between the user set and message set, \cite{zhou2010impact}. That is,
when a user broadcasts a message to others, only the users whose interests are consistent with the topic of the message can be the potential destinations. For example, if a user broadcasts a video message characterized by several words (a digest or a title) in his Facebook, only the followers who are interested in this message will open the video file. It is convincing that this behavior occurs due to an underlying user-message mapping. In fact, the transport load in the former process, i.e., the transport load for the Social-BroadCast has been investigated in \cite{mobihoc2014}, then our focus of this work is to model the formation of a Social-InterestCast.
Accordingly, we propose
a four-layered model consisting of
the physical layer, social layer, content layer, and session layer, as illustrated in Figure \ref{fig-4-layer-archi}.
Compared with the three-layered model in \cite{mobihoc2014}, this model introduces a \emph{content layer} where the relationship links are defined as the semantic similarity among the  messages.
We take an example shown in Figure \ref{fig-4-layer-archi} to explain this procedure:
A user delivers a ``Message 1". Then, all his four followers can receive the glance (or say abstract) of ``Message 1".
Finally, only two followers are filtered through the content layer to act as the valid destinations due to their interests to ``Message 1".

After the preparations made above, we compute the transport load.
Recall its definition,
we first need to
 investigate the complex geographic characteristics of data dissemination sessions in OSNs, i.e., the spacial distribution of traffic sessions (the location distribution of sources and destinations).

We adopt the following three steps (as shown in  Figure \ref{model-steps}) to get the spacial distribution of Social-InterestCast sessions depending on users' geographical distribution:

$\rhd$ Firstly, we
model the spatial distribution of social relationships, i.e., the geographical distribution of followers,  by investigating the correlations between user's social relationship formation and geographical distribution.
We adopt the \emph{population-based model} in \cite{mobihoc2014} for the advantages in realistic and analytical aspects.

$\rhd$ Secondly, we build the user-content interest mapping by matching the topics of content and interests of users.
Under the Social-InterestCast, for measuring the transport complexity, it is important to estimate how many followers of a source user will be interested in a given message and make a decision to view it.
We divide this problem into two cases in terms of the dependency among followers' decisions and the attractivity difference of message content.
To be specific,
when the decisions of a source user's followers are non-independent, Matthew Effect \cite{merton1968matthew} should be considered due to the preferential attachment of some messages that have attracted more followers;
when the attractivity of messages is of significant difference, the preferential attachment of some popular messages comes on the stage. Furthermore, combining with the experimental results based on Foursquare users' dataset \cite{Bao2012Location}, we get that: Under both cases above, for a source user, say $v_{k}$, we reasonably assume that the number of destinations of a session from $v_{k}$ follows a Zipf's distribution whose parameters depend on the degree of node $v_{k}$.

$\rhd$ Thirdly, we model the spatial distribution of traffic sessions, i.e., the geographical distribution of session sources and destinations,
by combining the effects of both users' social relationships on the social layer and the user-content interest mapping across the social and  content layers.

\begin{figure}[t]
\begin{center}
\scalebox{0.45}{\includegraphics{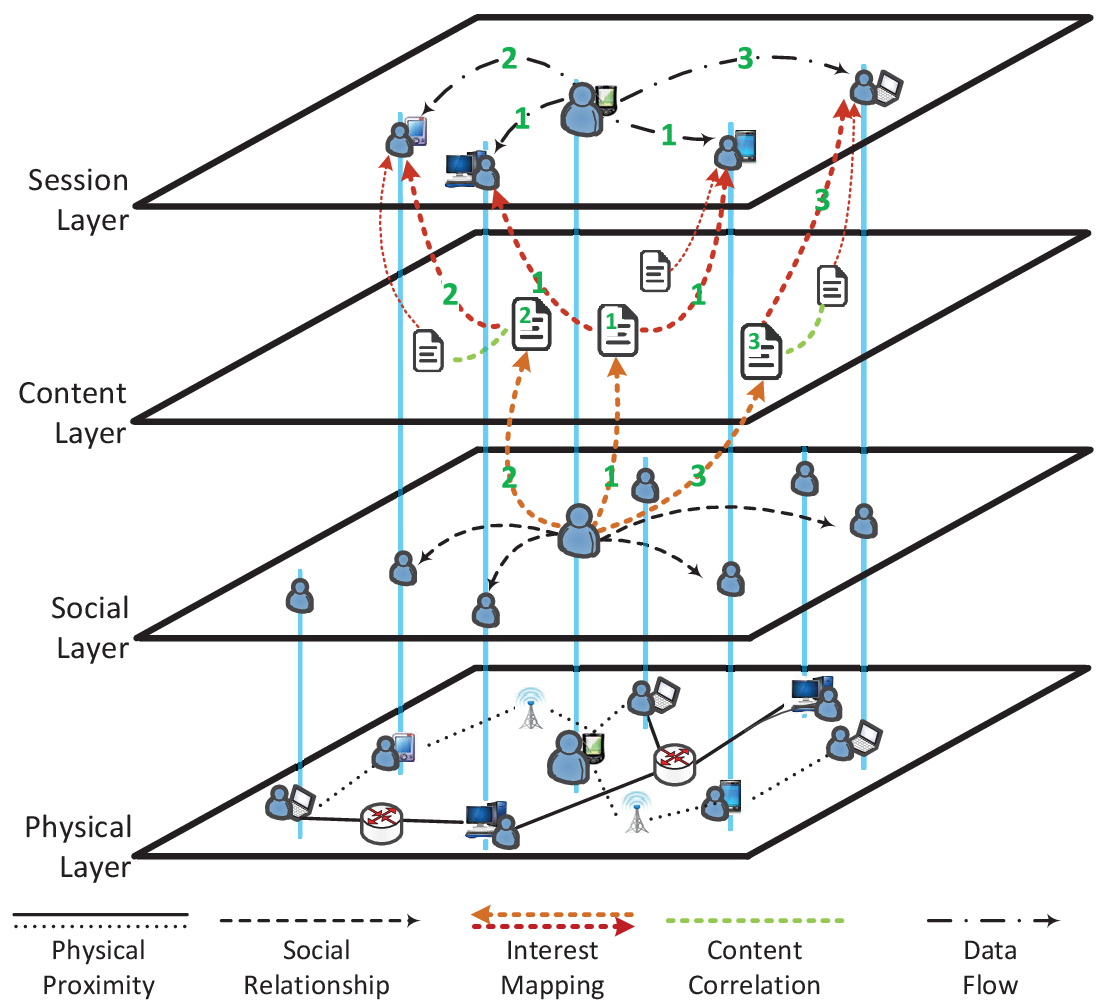}}
\caption{Four-layered Architecture and Social-InterestCast.}\label{fig-4-layer-archi}
\end{center}
\vspace{-0.2in}
\end{figure}

\begin{figure}[t]
\begin{center}
\scalebox{0.4}{\includegraphics{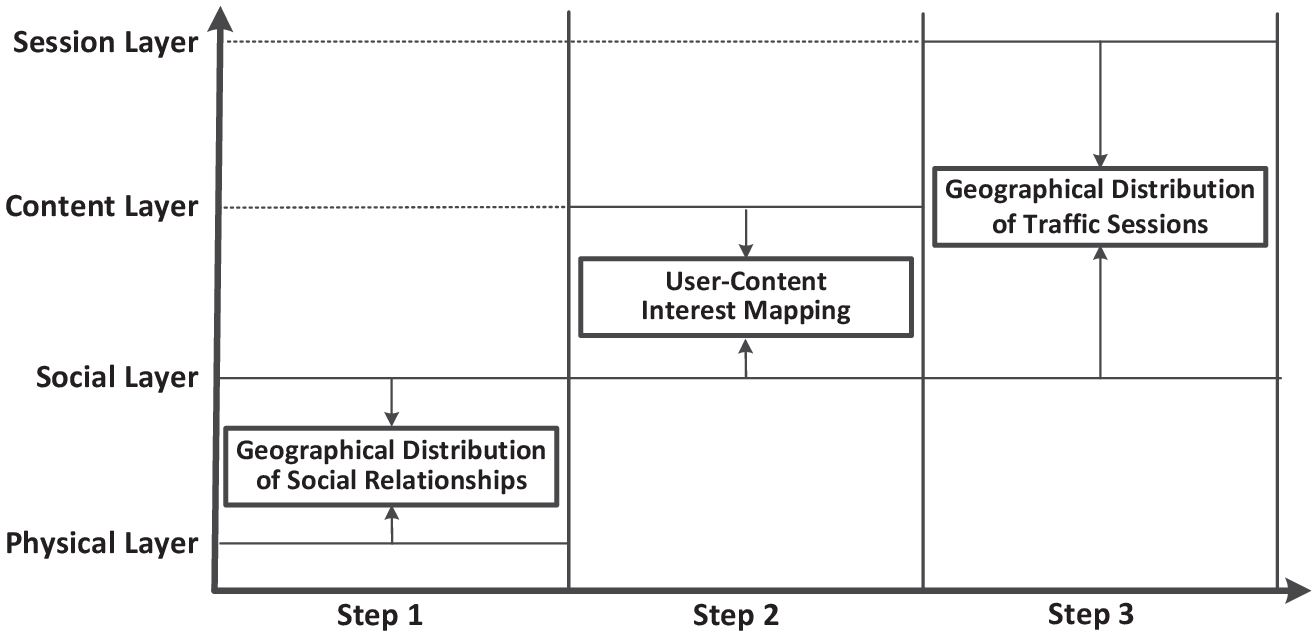}}
\caption{Three Steps for Modeling The Generation of Traffic Sessions. }\label{model-steps}
\end{center}
\vspace{-0.2in}
\end{figure}

Based on these three steps,  we can obtain the geographical distribution of traffic sessions, and thereby compute the bound on the aggregate transport distance over which the data is successfully transported from the source to the intended destinations.
Under a realistic assumption that every source sustains a data generating rate of constant order,
our result demonstrates that the transport complexity
for Social-InterestCast in an OSN over a carrier network with an optimal communication architecture
 varies in the range $\left[\Theta(n), \Theta(n^2)\right]$ depending on the clustering exponents of relationship degree, relationship formation and dissemination pattern, where $n$ is the user
number of the OSN.

The aforementioned results for Social-InterestCast clarify the differences from those for data dissemination in conventional communication networks,
and can serve as a valuable metric to measure network performance and the difficulty of data dissemination in large-scale networks.
Furthermore, if a specific carrier communication network is introduced, our results on transport distances can also play an important role in analyzing some system performances, e.g., network capacity and latency.

To the best of our knowledge, this is the first work to measure the transport difficulty for data dissemination in OSNs for modeling session patterns by taking into account the interest-driven characteristic and users' behaviour in real-world OSNs.

The rest of this paper is organized as follows.
Firstly, we show the related work in Section \ref{related-work}, and
give the metric of transport difficulty in Section \ref{sec-formulation}. In Section \ref{section-Network Model}, we propose our system model of Social-InterestCast.
In Section \ref{section-social-dissemination-systeme-setting}, we derive the transport complexity for the Social-InterestCast.
Finally, we draw a conclusion and make a discussion on our future work in Section \ref{sec-conclud-future}.

\section{Related Work}\label{related-work}
Online social networks (OSNs) provide a platform for hundreds of millions of the Internet users worldwide to produce and consume content. Users in OSNs have the access to the unprecedented large-scale information repository \cite{guille2013information}.
Moreover, OSNs play an important role in the information diffusion by increasing the spread of novel information and diverse viewpoints, and have shown their power in many situations \cite{bakshy2012role}.
There are some representative topics that have been extensively studied in the research community of OSNs, such as detecting popular topics \cite{mathioudakis2010twittermonitor}, digging potential popularity of contents \cite{altman2013stochastic},
modeling information diffusion \cite{myers2012information,diffusion2013}, identifying influential spreaders, leaders or followers  \cite{identi2013leader}, presenting influence mechanisms \cite{li2012modeling}, maximizing the spread of an information epidemic \cite{kandhwaycampaigning}, predicting the properties/signs of links \cite{ye2013predicting} and the missing preference of a user \cite{liu2013soco}, estimating the proximity of social networks \cite{song2012clustered},
and exploring security issues \cite{social2012com}, and so on.
However, most existing work mainly focused on the information diffusion scheme in overlay relationship networks of users in social networking sites/services (SNSs), \cite{Bayesian2012}.

Meanwhile, as SNSs become increasingly popular for information exchange, the traffic generated by social applications rapidly expands \cite{index2014global}.
A report of Shareaholic \cite{shareaholic2013} showed that,
between  November 2012 and November 2013, social media referral traffic from the top five social media sites increased by $111\%$ while search traffic from the top five search engines had decreased by $6\%$.
Therefore, besides the analysis of information diffusion schemes in overlay social networks \cite{bakshy2012role,mathioudakis2010twittermonitor,myers2012information,kandhwaycampaigning},
and the gain of social relationships in terms of the information dissemination \cite{yagan2013conjoining,chen2013social},
an in-depth understanding of the impact of increasing traffic generated by OSNs on carrier communication networks, e.g., the Internet, is convincingly necessary for evaluating current OSNs systems, optimizing network architectures and the deployment of servers for OSNs, and even designing future OSNs \cite{chen2013technological}.
To address this issue, we need to propose practical modeling and effective analytic methods for content distribution of OSNs implemented in carrier communication networks, since OSNs change both information propagation schemes and traffic session patterns in communication networks due to the involvement of overlay social relationships, users' preferences and decisions, \cite{evolution2014}.
Accordingly, in this paper, we aim at modeling content distribution \cite{amble2011content} in OSNs, and measuring its transport difficulty imposed on the carrier communication networks of OSNs.

The most relevant work investigating the load imposed on the carrier communication network is \cite{mobihoc2014},
where the metric called \emph{transport load}  (or \emph{traffic load} )  was proposed to quantify such a load.
In \cite{mobihoc2014},  only a theoretical  lower bound on transport load was derived without sufficiently clarifying the role of this result in measuring the transport difficulty  of data dissemination for specific online social networking applications.
We state that transport difficulty for a specific application should be an intrinsic property of this application when the carrier network is  given.
Consequently, the metric for transport difficulty should be a fundamental limit of a certain metric on transport burden.
From such a perspective,
we define the fundamental limit of  \emph{transport load} as  a new metric called \emph{transport complexity}, i.e., the \emph{minimum required} transport load for an OSN over a given carrier network.
Besides this,
comparing with \cite{mobihoc2014}, we make some significant improvements in this work:
In \cite{mobihoc2014}, the authors proposed a three-layered model to formulate data dissemination sessions for social applications in OSNs.
The session generation is simply modeled as Social-BroadCast, where the source broadcasts messages to all of its followers, and all followers have to be the passive destinations.
Apparently, they neglected the important features of session generation in OSNs, i.e., the fact that the generation of traffic sessions depends on both users' social relationships and the user-content interest mappings.
Accordingly, in this work,  we focus on an interest-driven session pattern, called Social-InteretCast.
To model its formation,  we introduce a \emph{content layer} as an interest-based filter to build interest links from users to messages, and propose a \emph{four-layered system model} as illustrated in Figure \ref{fig-4-layer-archi}.

\section{Metric of Transport Difficulty}\label{sec-formulation}

Considering an online social network (OSN), denoted by $\mathbb{N}$,  consisting of $n$ users, we denote the set of all users by $\mathcal{U}=\left\{u_i \right\}^{n}_{i=1}$. Let a subset
$\mathcal{S}=\left\{u_{\mathcal{S}, k}\right\} ^{n_s}_{k=1}\subseteq \mathcal{U}$ denote the set of all sources, where $|\mathcal{S}|=n_s$.
Denote a data dissemination session from a source $u_{\mathcal{S}, k}$ by an ordered pair $\mathbb{D}_{\mathcal{S}, k}=<u_{\mathcal{S}, k}, \mathcal{D}_{\mathcal{S}, k}>$, where
$\mathcal{D}_{\mathcal{S}, k}$ is the set of all destinations of $u_{\mathcal{S}, k}$.


In this work, we intend to investigate the transport difficulty for data dissemination in OSNs,
i.e.,  the load  on the carrier communication networks generated by a specific social applications.
To quantify such a load,  we  introduce a metric from \cite{mobihoc2014}, called \emph{transport load}, which
depends on two factors:  \emph{data requested rate} and \emph{data transport distance}.

\textbf{Data Requested Rate:}
Data requested rate is determined by QoS (quality of service) of
the application.  For data dissemination applications in OSNs, the QoS of data transport applications  is usually set up according to  \emph{the generating
rate of content at source users}, i.e., the so-called \emph{data arrival rate}.
Moreover, data requested rate is generally defined as a certain portion of data arrival rate.  In other words,
a portion of data arriving at the source user is requested to be successfully disseminated.
Therefore, we reasonably assume that \emph{the data requested rate has the same order as the data arrival rate}.

The temporal behavior of messages arriving at a user in an OSN has been addressed by analyzing  some real-life OSNs,\cite{milcom2010twitter,benevenuto2009characterizing}.
For example, Perera et al. \cite{milcom2010twitter}
developed a software architecture that uses a Twitter application program interface (API) to collect the tweets sent to specific users.
They indicated that the arrival process of new tweets to a user can be modeled as a Poisson Process.
In this paper, we just take it as an empirical argument for assuming  the data arrival for a user as a data source follows a Poisson Process, \cite{milcom2010twitter}.
 In our work, for each session $\mathbb{D}_{\mathcal{S}, k}=<u_{\mathcal{S}, k}, \mathcal{D}_{\mathcal{S}, k}>$,
 we simply set the \emph{data requested rate} to be a portion of the data arrival rate.
Then, we can denote the data requested rate by
a vector
$\mathbf{\Lambda}_{\mathcal{S}}=(\lambda_{\mathcal{S},1}, \lambda_{\mathcal{S},2}, \cdots, \lambda_{\mathcal{S},n_s}),$
 where $\lambda_{\mathcal{S},k}$ is the rate of a Poisson Process at user $u_{\mathcal{S}, k}$ (for $k=1,2,\cdots, n_s$).

In practice, the data arrival rate is dispensable on the scale of the specific OSN, i.e., the value of $n$, although the data arrival rate depends on many factors, such as the specific form and quality of social services.
Combining the facts that the \emph{data requested rate} is a certain portion of the data arrival rate,
we can make a reasonable and practical assumption that $\lambda_{\mathcal{S},k}=\Theta(1)$ for $k=1,2,\cdots, n_s$.

In this work, we aim to analyze the fundamental limits on the transport load for data dissemination in OSNs according to the network size.
Therefore, it is appropriate to note at this point that the specific distribution of data requested rate has no impact on the results (in order sense) as long as it holds that $\lambda_{\mathcal{S},k}=\Theta(1)$ for $k=1,2,\cdots, n_s$.
This is why we do not make an intensive study of the specific distribution of data requested rates.


\textbf{Data Transport Distance:}
The data transport distance  is comprehensively determined by traffic session pattern of the application,
communication network architecture, and transmission schemes.
For a given transmission scheme in a given  OSN $\mathbb{N}$, say $\mathbf{S}_\mathbb{N}$, define a vector
\[\mathbf{D}_{\mathcal{S}}(\mathbf{S}_\mathbb{N})=\left(d_{\mathcal{S},1}(\mathbf{S}_\mathbb{N}), d_{\mathcal{S},2}(\mathbf{S}_\mathbb{N}), \cdots, d_{\mathcal{S},n_s}(\mathbf{S}_\mathbb{N}) \right),\]
 where
$d_{\mathcal{S}, k}(\mathbf{S}_\mathbb{N})$ represents the \emph{transport distance} over which the message for session $\mathbb{D}_{\mathcal{S}, k}$ is successfully transported from the source $u_{\mathcal{S}, k}$ to all destinations.


\textbf{Transport Load and Transport Complexity:}
In the OSN $\mathbb{N}$, given a specific carrier communication network, define the transport load for a dissemination session, say $\mathbb{D}_{\mathcal{S}, k}$, as
\[
\widetilde{\mathbf{L}}_{\mathbb{N},\mathcal{S}}(\mathbb{D}_{\mathcal{S}, k})=\lambda_{\mathcal{S},k} \cdot d_{\mathcal{S},k}(\mathbf{S}_\mathbb{N}).\] Furthermore, the aggregated transport load for dissemination sessions from all sources in  $\mathcal{S}$ can be  defined as
\begin{equation}\label{equ-transport-load}
\mathbf{L}_{\mathbb{N},\mathcal{S}}= \min\nolimits_{\mathbf{S}_\mathbb{N}\in \mathbb{S}} \mathbf{\Lambda}_{\mathcal{S}}  \ast \mathbf{D}_{\mathcal{S}}(\mathbf{S}_\mathbb{N}),
\end{equation}
where $\mathbb{S}$ is the set of all feasible transmission schemes, and $\ast$ is an inner product.

Based on the definition of transport load, we further introduce the \emph{feasible transport load}.

\begin{defn}[Feasible Transport Load]\label{feasible-traffic-load}
For a social data dissemination with a set of social sessions $\mathbb{D}_{\mathcal{S}}=\{\mathbb{D}_{\mathcal{S}, k}\}_{k=1}^{n}$, we say that the  transport load
$\mathbf{L}_{\mathbb{N},\mathcal{S}}$
is \emph{feasible} if and only if
there exists an appropriate transmission scheme with a communication deployment, denoted  by $\mathbf{S}_\mathbb{N}$,
such that
it holds that $\mathbf{D}_{\mathcal{S}}(\mathbf{S}_\mathbb{N}) * \mathbf{\Lambda}_{\mathcal{S}}  \leq \mathbf{L}_{\mathbb{N},\mathcal{S}}$ ensuring that the network throughput of $\mathbf{\Lambda}_{\mathcal{S}} $ is achievable.
\end{defn}

Based on the definition of feasible transport load, we finally define the \emph{transport complexity}.
\begin{defn}[Transport Complexity]\label{tansmission-complexity}
We say that the \emph{transport complexity} of the class of random social data disseminations $\mathbb{D}_{\mathcal{S}}$ is
of order $ \Theta(f(n)) $ bit-meters per second
if there are deterministic constants $ c_1 > 0 $ and $ c_2 < \infty $ such that:
There exists a communication architecture $\mathbb{N}$ and corresponding transmission schemes such that
\begin{equation*}
  \lim_{n\longrightarrow\infty} \Pr (\mathbf{L}_{\mathbb{N},\mathcal{S}} = c_1 \cdot f(n) \mbox{ is feasible}) = 1,
\end{equation*}
and  for any possible communication architectures and transmission schemes, it holds that:
\begin{equation*}
   \liminf_{n\longrightarrow\infty} \Pr(\mathbf{L}_{\mathbb{N},\mathcal{S}} = c_2  \cdot f(n) \mbox{ is feasible}) <1.
\end{equation*}
\end{defn}

\section{Model of Social-InterestCast}\label{section-Network Model}

For each session, the geographical distribution of the source and destination(s) plays a key role in generating the transport load.
So it is critical to analyze  the correlation between the spatial distribution of sessions and geographical distribution of users in online social networks (OSNs).

To address this issue, we propose a four-layered model,  consisting of
the  physical layer,  social layer, content layer, and  session layer, as illustrated in Figure \ref{fig-4-layer-archi}.

\subsection{Four-Layered System Model}

\subsubsection{Physical Layer- Physical Network Deployment}
The deployment of the so-called physical network can be divided into two parts. The first is the geographical distribution of  social users. The second is the communication architecture of  the carrier network.

\textbf{Geographical Distribution of Social Users:} We consider the random network consisting of a random number $N$ (with  $\mathbf{E}(N)=n$)\footnote{Throughout the paper,  let $\mathbf{E}[X]$ denote the mean of a random variable $X$.} users who are randomly distributed
over a square region of area $S:=n$, where $\mathbf{E}[N]=n$.
To avoid border effects, we consider wraparound
conditions at the network edges, i.e., the network area
is assumed to be the surface of a two-dimensional Torus $\mathcal{O}$.
To simplify the
description, we assume that the number of nodes is exactly $n$, and denote the set of nodes by $\mathcal{V}=\{v_k\}_{k=1}^n$,
without changing our results in order sense.
We make a compromise in the generality and practicality of the geographical distribution of social users,
in order to concentrate on clarifying the impacts of users' interest on the session formation.
Specifically, we follow the setting where all users are distributed according to a homogeneous Poisson point process, taking no account of the inhomogeneous property of the uneven population distribution in real-life OSNs.
The derived results are expected to serve as the basis for investigating more realistic scenarios under
the more practical but complex deployment models, such as the Clustering Random Model (CRM) according to the shotnoise Cox process \cite{alfano2009capacity} and Multi-center Gaussian Model
(MGM) in \cite{mgm2012}.

\textbf{Communication Architecture of  Carrier Network:}
For online social networking services, we state that the carrier network is indeed the mobile Internet.
This means that there is hardly a uniform communication architecture, e.g., centralized or distributed network architecture, practically characterizing the architecture of a real-world carrier network for OSNs.
This is also the reason why we derive the transport complexity
for Social-InterestCast sessions  over the carrier network with optimal communication architecture.
The results can serve as the common lower bounds on transport complexity for Social-InterestCast sessions over any carrier networks.

\subsubsection{Social Layer- Social Relationship}
To model the geographical distribution of relationships, we introduce the
\emph{population-distance-based model}  from
\cite{mobihoc2014}.  
In \cite{mobihoc2014}, Wang et al. provided a numerical evaluation for this social model based on a Brightkite users' dataset \cite{Brightkite2007}.
Before deciding to adopt this model, we complement an evaluation based on another dataset from \cite{Hossmann2011},
i.e.,  a Gowalla users' dataset.
For the completeness, we include the evaluations based on these datasets in Appendix \ref{appendix-sec-evaluation}.

%
Let $\mathcal{D}(u, r)$ denote the disk centered at a node $u$ with radius $r$ in the deployment region $\mathcal{O}$, and let
$N(u, r)$ denote the number of nodes contained in  $\mathcal{D}(u, r)$.
Then, for any two nodes, say $u$ and $v$, we can define the \emph{population-distance} from $u$ to $v$ as $N(u, |u-v|)$,
where $|u-v|$ denotes the Euclidean distance between node $u$ and node $v$.

For completeness,  we include the population-distance-based model as follows:
For a node $v_k\in \mathcal{V}$, we construct its relationship set $\mathcal{F}_k$ of $q_k$ ($q_k\geq 1$) follower nodes by the following procedures: For a particular node $v_k\in \mathcal{V}$, denote the number of followers by $q_k$, we assume that it follows a Zipf's
distribution \cite{manning1999foundations}, i.e.,
\begin{equation}\label{equ-zip-destinations}
\Pr(q_k=l)={\left(\sum\nolimits_{j=1}^{n-1}j^{-\gamma}\right)^{-1}} \cdot {l^{-\gamma}}.
\end{equation}
Note that we accordingly give a numerical evaluation based on Gowalla dataset \cite{Hossmann2011} for the Zipf's degree distribution in Appendix \ref{appendix-ddobu234324}.

Then, to model the geographical distribution of social relationships, we make the position of node $v_k$ as the \emph{reference point} and choose $q_k$ \emph{points} independently on the torus region $\mathcal{O}$ according to a  probability distribution with the following density function:
\begin{equation}\label{equ-social-distribution}
f_{v_k}(X) = \Phi_k(S, \beta) \cdot \left ( \mathbf{E}[N(v_k, |X-v_k|)] +1 \right )^{-\beta} ,
\end{equation}
where the random variable $X:=(x,y)$ denotes the position of a selected point in the deployment region,
and $\beta\in [0, \infty)$ represents the clustering exponent of relationship formation; the coefficient $\Phi_k(S, \beta)>0$  depends on $\beta$ and $S$ (the area of deployment region), and it satisfies:
\begin{equation}\label{equ-formation-completeness-condition}
\Phi_k(S, \beta) \cdot \int_{\mathcal{O}} \left ( \mathbf{E}[N(v_k, |X-v_k|)] +1 \right )^{-\beta} d X =1.
\end{equation}

Further, we determine the nearest followers for specific users.
Let $\mathcal{A}_k=\{p_{k_i}\}_{i=1}^{q_k}$ denote the set of these $q_k$ points.
Let $v_{k_i}$ be
the nearest node to  $p_{k_i}$, for $1 \leq i \leq q_k$ (ties are broken randomly). Denote the set of these $q_k$ nodes by
$\mathcal{F}_k=\{v_{k_i}\}_{i=1}^{q_k}$.
We call point $p_{k_i}$ the \emph{anchor point} of $v_{k_i}$, and define a set
$\mathcal{P}_k:= \{v_k\} \cup \mathcal{A}_k.$

In this model, we can observe that the degree distribution above depends on the specific network size, i.e., the number of users $n$.
Note that the correlation between the users' degree distribution and geographical distribution should not be neglected, although we simplify it in this work. We will entirely address this issue in the future work. Throughout this paper, we use $\mathbb{P}(\gamma, \beta)$ to denote the population-distance-based model. Here, $\gamma$ and
$\beta$ denote the clustering exponents of relationship degree and relationship formation, respectively.
For their detailed meanings, please refer to Eq.(\ref{equ-zip-destinations}) and Eq.(\ref{equ-social-distribution}), respectively.

\subsubsection{Content Layer- User-Content Interest Mapping}
In this work, we focus on the case where a user only view the messages within his interest.
Then, modeling the mapping between users and content is the first critical step for the final generation of traffic sessions.
The basic idea for this issue is to build the interest links from users to messages according to the similarity
of users' interests to message semantics.
Specifically, we dig the user's interest by abstracting the topics of all his posted messages.

\subsubsection{Session Layer- Data Traffic Session}\label{subsec-social-session-layer}
When observing users' behaviour in real-life OSNs, there exists a significant feature of traffic session in OSNs, i.e.,
the users' intrinsic interest and subjective choice  are often indispensable for forming the sessions, \cite{li2014understanding}.
For all the common traffic patterns in OSN, we focus on a typical data dissemination, called SocialCast, which can be divided into two successive phases: \emph{Passive Phase} and \emph{Initiative Phase}.
In the passive phase, a source user holding a message sends self-appointedly the digest of this message to all his followers. The followers passively receive this digest.
This process just acts like a broadcast in all the followers of this source user.
This process  is called \emph{Social-BroadCast} in \cite{mobihoc2014}.
In the initiative phase, according to the interest, a user will autonomously determine whether to download the complete message based on the digest of a message from who he follows.
We define such an interest-driven dissemination process \emph{Social-InterestCast}.
The Social-InterestCast is defined by taking into account specific user interests and their effects on session generation.
Our focus in this work is Social-InterestCast.

\subsection{Social-InterestCast}\label{social_interestcatsmodel}
In real-life OSNs,  there exists naturally a common phenomenon: When a user broadcasts messages to others, the number of potential destinations is depending on underlying relationships among users (both the source and destinations), \cite{li2014understanding}.
For a delivered message, to explore the characteristics of source user and final destinations, we begin with studying relationships between users and messages. Then, it is common that if a user broadcasts a message characterized by several words (abstract/digest) in his Facebook, only the users whose interests are consistent with the topic of the message can be the final destinations. We say that this behavior occurs due to an underlying \emph{user-message mapping} between user set and message set. For such type of  session, we call it \emph{Social-InterestCast}, which is different from Unicast in \cite{azimdoost2012globecomcapacity} and Social-BroadCast in \cite{mobihoc2014}. In the following,
we firstly dig the distribution of destinations of Social-InterestCast sessions based on a real-life dataset of Foursquare \cite{Bao2012Location}.

\subsubsection{Analyzing Social-InterestCast Formation}\label{social_interestcatsmodel-subusermbsub}
In this part, we aim to analyze the distribution of destination number for Social-InterestCast by exploring the session formation mechanism
over a real-life Foursquare dataset \cite{Bao2012Location}.
Foursquare was created in 2009. It is a location-based social networking service provider where users share their locations via ``checking-in" function. The dataset contains $104,478$ check-in tips generated by $31,544$ users in Los Angeles (LA).
It provides the follower lists and check-in messages of all users.


In real-life OSNs, the users' \emph{intrinsic interest} and \emph{subjective choice} are usually indispensable for forming the traffic sessions.
Under such a common type of sessions, a user will autonomously determine whether to download the content from whom he follows depending on his interest, \cite{borghol2012untold}.
We apply the \emph{Latent Dirichlet Allocation} (LDA) \cite{Blei2003Latent} to extract topics from the check-in messages generated by users.  LDA assumes that words of each document are drawn from a mixture of topics.
We define all messages generated by a user, say $v$, as a \emph{user document} $c_{v}$.
Then, we apply LDA model to extract the topic distribution $\theta_{c_{v}}$, serving as user's interests. Figure \ref{fig-result-lda-model} depicts the LDA model.
With the user's interest distribution $\theta_{c_{v}}$, we introduce \emph{Jensen-Shannon divergence} ${JSD}(P||Q)$ \cite{dhillon2002enhanced} to measure the similarity between users' interest $\theta_{c_{v}}$ and newly arrived message $\theta_{c_{new}}$.

\begin{figure}[t]
\begin{center}
\scalebox{0.92}{\includegraphics{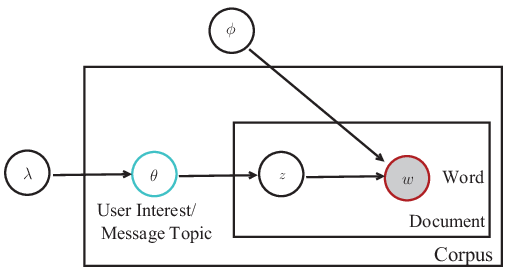}}
\caption{$\lambda$  is the hyperparameter of LDA model, $\phi$ is the topic-word distribution, $\theta$ is the topic distribution of user document, i.e., user interest, $z$ denotes the latent topic in document $c_{v}$, and $w$ is an observed word.} \label{fig-result-lda-model}
\end{center}
\vspace{-0.2in}
\end{figure}

In our evaluation, we assume that only when the user's interest $\theta_{c_{v}}$ is similar to the message's topic distribution $\theta_{c_{new}}$, user $v$ will further view the newly arrived message $c_{new}$.
In other words,  when the value of ${JSD}(\theta_{c_{v}}||\theta_{c_{new}})$ is smaller than that of a pre-defined threshold $\delta$, the user $v$ is selected as a final destination of the message $c_{new}$.

\begin{figure}[t]
\begin{center}
\scalebox{0.98}{\includegraphics{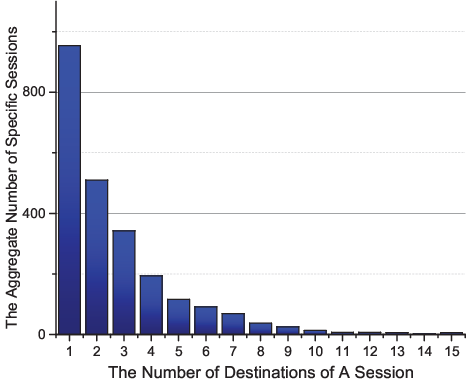}}
\caption{Destinations Distribution with Relationship Degree $q_k=22$.}\label{fig-destination-distribution-22}
\end{center}
\vspace{-0.2in}
\end{figure}
As illustrated in Figure \ref{fig-destination-distribution-22},
one of the experimental results based on Foursquare dataset \cite{Bao2012Location}
 shows the distribution of session destinations initiated from users who have $22$ followers.
In Figure \ref{fig-destination-distribution-22},
the $X$-axis denotes the number of destinations at a certain degree $22$, the $Y$-axis denotes the number of cases where destination number equals  $X$.  We can see that $Y$ decreases rapidly at first and then gently with the increasing of $X$.

To provide more insights, Figure \ref{fig-destination-distribution} illustrates the distributions of destination number for different relationship degree $q_k$ with the threshold $\delta = 6.5$.
Since users with larger relationship degree cannot show convincingly the statistical characteristics due to the small number of such users (the small sample size), we have removed those users whose degree is larger
than $32$ to obtain a more representative result.
 In Figure \ref{fig-destination-distribution}, the experimental result in each subfigure shows that the distribution of destination number for each Social-InterestCast session appears a long-tailed feature.

\begin{figure}[t]
\begin{center}
\scalebox{0.5}{\includegraphics{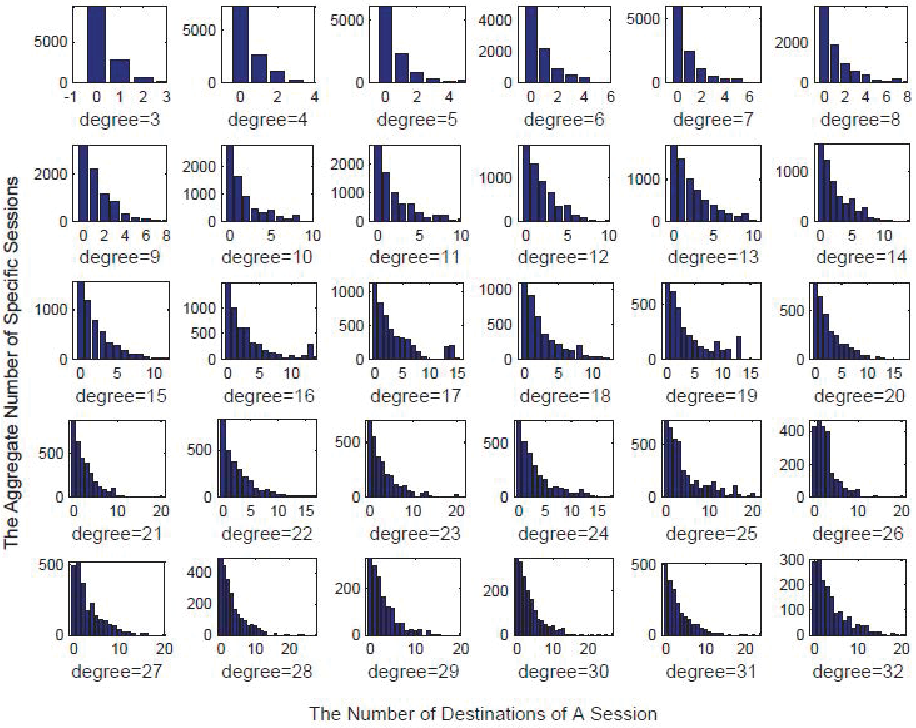}}
\caption{Destinations Distribution in Foursquare.}\label{fig-destination-distribution}
\end{center}
\vspace{-0.2in}
\end{figure}

Note that the long-tailed property in the distribution of destination number is just derived under our hypothetical formation mechanism of Social-InterestCast using LDA model.
As an important empirical argument,  Li et al. \cite{li2014understanding} has provided a numerical validation based on RenRen dataset for
such a long-tailed distribution by counting the ratio of the number of viewed videos and that of the received videos from friends.

\subsubsection{Modeling Social-InterestCast}\label{model-for-interestcast}
Based on the result in Section \ref{social_interestcatsmodel-subusermbsub},  we assume that the distribution of session destinations for Social-InterestCast follows approximately a Zipf's distribution. Specifically, for a source node $v_{k}$, based on the dependency among followers' decisions and the attractivity difference of message content, we make a reasonable assumption that the number of destinations follows a Zipf's distribution whose parameters depend on the relationship degree of $v_{k}$, \ie,
\begin{equation}\label{zipf-final-destination}
  \mathrm{Pr}(d_{k}=d|q_{k}=l)={\left(\sum\nolimits_{m=1}^{l}m^{-\varphi_{k}}\right)^{-1}} \cdot  d^{-\varphi_{k}},
\end{equation}
where $ d_{k} $ is the number of  final destinations of a session initiated by $v_{k}$, and $ \varphi_k\in[0,\infty) $ is the exponent of data dissemination. Here, for problem simplification, we first study the special case where
$\varphi_{k} \equiv  \varphi $ for every $v_k$.


In the following Section \ref{section-social-dissemination-systeme-setting}, we mainly study this type of data disseminations and give the corresponding results for transport complexity.

\section{Transport Complexity for \\ Social-InterestCast}\label{section-social-dissemination-systeme-setting}
In this section, we aim to derive the transport complexity for Social-InterestCast.

\subsection{Social-InterestCast Sessions}\label{interestcast-session}
Denote a Social-InterestCast session by an ordered pair $\mathbb{D}^{\mathrm{I}}_k=<v_{k}, \mathcal{I}_k>$, where $ v_{k} $ is the source and each element $ v_{k_{i}} $ in $\mathcal{I}_k=\{v_{k_i}\}_{i=1}^{d_k}$ is the nearest node to the corresponding $ p_{k_{i}} $ in $ \mathcal{A}^\mathrm{I}_k=\{p_{k_i}\}_{i=1}^{d_k} $, the random variable $d_{k}$ denotes the number of potential destinations for session $\mathbb{D}^{\mathrm{I}}_k$, i.e., the followers of $v_{k}$ who are interested in the message from $v_{k}$ in this session.
We call point $p_{k_{i}}$ the \emph{anchor point} of $v_{k_{i}}$, and define a set $\mathcal{P}^\mathrm{I}_{k} := \{v_{k}\} \cup \mathcal{A}^\mathrm{I}_k$. Then, we can get the following lemma.

\begin{lem}\label{lem-EMST-B-D} For a Social-InterestCast session $\mathbb{D}^{\mathrm{I}}_k$, when $ d_{k}=\omega(1) $, with probability $1$, it holds that
\begin{equation*}
|\EMST(\mathcal{A}^\mathrm{I}_k)|=\Theta(|\EMST(\mathcal{P}^\mathrm{I}_{k})|)=\Theta(L^\mathrm{I}_{\mathcal{P}}(\beta, d_k)),
\end{equation*}
where $\EMST(\cdot)$ denotes the Euclidean minimum spanning tree over a set, and
\begin{equation}\label{eq-L-P-beta-9765}
L^\mathrm{I}_{\mathcal{P}}(\beta, d_k) = \left\{
  \begin{array}{ll}
\Theta\left( \sqrt{d_k} \right)  , &  \beta>2;  \\
\Theta\left(\sqrt{d_k} \cdot \log n  \right), &  \beta=2;  \\
\Theta\left(\sqrt{d_k} \cdot n^{1-\frac{\beta}{2}} \right)  , & 1< \beta < 2;  \\
 \Theta\left(\sqrt{d_k} \cdot \sqrt{\frac{n}{\log n}} \right) , &  \beta=1;  \\
  \Theta\left(\sqrt{d_k} \cdot \sqrt{n} \right)  , &  0\leq\beta <1.
  \end{array}
\right.
\end{equation}
\end{lem}
\begin{proof}
By Lemma \ref{lem-growthsteel} in Appendix \ref{apeneidisx-ausouful-lmemas}, with probability $1$, it holds that
\[|\EMST(\mathcal{A}^\mathrm{I}_k)|=\Theta(\sqrt{d_k} \cdot \int_{\mathcal{O}}\sqrt{f_{X_0}(X)} dX),\]
where
\[
f_{X_0}(X)=\left\{
  \begin{array}{ll}
\Theta\left( (|X-X_0|^2 +1)^{-\beta} \right)  , &  \beta>1;  \\
\Theta\left(\left(\log n \cdot (|X-X_0|^2 +1) \right )^{-1} \right)    , &  \beta=1;  \\
\Theta\left( n^{\beta-1} \cdot   (|X-X_0|^2 +1)^{-\beta}  \right)   , &  0\leq\beta <1.
  \end{array}
\right.
\]
Next, we compute $\int_{\mathcal{O}}\sqrt{f_{X_0}(X)} dX$.
According to the value of $ \beta $, we have:

(1) When $ 0 \leq\beta <1 $,
\begin{equation*}
  \int_{\mathcal{O}}\sqrt{f_{X_0}(X)} dX =  \Theta\left(n^{\frac{\beta-1}{2}} \cdot \int_{\mathcal{O}}\frac{dX}{\left(|X-X_{0}|^{2}+1\right)^{\frac{\beta}{2}}}\right)
\end{equation*}
$~~~~~~~~~~~~~~~~~~~~~~~= \Theta\left(n^{\frac{\beta-1}{2}} \cdot n^{1-\frac{\beta}{2}}\right) = \Theta\left(\sqrt{n}\right) $.

(2) When $ \beta=1 $,
\begin{equation*}
  \int_{\mathcal{O}}\sqrt{f_{X_0}(X)}dX = \Theta\left(\frac{1}{\sqrt{\log n}}\cdot \int_{\mathcal{O}}\frac{dX}{\left(|X-X_{0}|^{2}+1\right)^{\frac{1}{2}}}\right)
\end{equation*}
$~~~~~~~~~~~~~~~~~~~~~~~= \Theta\left(\frac{1}{\sqrt{\log n}} \cdot \sqrt{n}\right) = \Theta\left(\sqrt{\frac{n}{\log n}}\right) $.

(3) When $ \beta>1 $,
\begin{equation*}
\int_{\mathcal{O}}\sqrt{f_{X_0}(X)}dX = \Theta\left( \int_{\mathcal{O}}\frac{dX}{\left(|X-X_{0}|^{2}+1\right)^{\frac{\beta}{2}}}\right).
\end{equation*}

Especially,

when $ 1<\beta<2 $, $ \Theta\left( \int_{\mathcal{O}}\frac{dX}{\left(|X-X_{0}|^{2}+1\right)^{\frac{\beta}{2}}}\right) = \Theta\left(n^{1-\frac{\beta}{2}}\right) $;

when $ \beta=2 $, $ \Theta\left( \int_{\mathcal{O}}\frac{dX}{\left(|X-X_{0}|^{2}+1\right)}\right) $=$ \Theta\left(\log n\right) $;

when $ \beta>2 $, $ \Theta\left( \int_{\mathcal{O}}\frac{dX}{\left(|X-X_{0}|^{2}+1\right)^{\frac{\beta}{2}}}\right) $=$ \Theta\left(1\right) $.

Then, by summarizing the derived results above, we can obtain that
\[
|\EMST(\mathcal{A}^\mathrm{I}_k)|=\Theta(L^\mathrm{I}_{\mathcal{P}}(\beta, d_k)),
\]
where $L^\mathrm{I}_{\mathcal{P}}(\beta, d_k)$ is defined in Eq.(\ref{eq-L-P-beta-9765}).

Let $\underline{L}$ denote
the smallest distance from the points in $\mathcal{A}^\mathrm{I}_k$ to point $X_0$. We can get that
\[|\EMST(\mathcal{A}^\mathrm{I}_k)|   \leq       |\EMST(\mathcal{P}^\mathrm{I}_k)|    \leq        |\EMST(\mathcal{A}^\mathrm{I}_k)|+\underline{L},\]
Furthermore, according to the fact that $\underline{L}=O(|\EMST(\mathcal{A}^\mathrm{I}_k)|)$  with probability $1$,  we get
$|\EMST(\mathcal{P}^\mathrm{I}_k)|=\Theta(|\EMST(\mathcal{A}^\mathrm{I}_k)|)$,
which finally completes the proof.
\end{proof}

\subsection{Main Results on Transport Complexity}\label{Z-D-result-interestcast}
\subsubsection{Bounds on Aggregated Transport Distance}
The bound on the transport complexity depends on the value of $ \sum\nolimits_{k=1}^{n}|\mathrm{EMST}(\mathcal{P}^\mathrm{I}_k)| $, which we compute in the following theorem.
\begin{thm}\label{lem-delta-0-all-est} For all Social-InterestCast sessions $ \{\mathbb{D}^{\mathrm{I}}_k\}_{k=1}^{n} $ with the Zipf's distribution as defined in Eq.(\ref{zipf-final-destination}), with high probability, the order of $\sum\nolimits_{k=1}^{n}|\mathrm{EMST}(\mathcal{P}^\mathrm{I}_k)| $ holds as presented in Table \ref{tab-lower-bound-EMST}.
\end{thm}

\begin{table*}[!t]
\renewcommand{\arraystretch}{1.0}
\centering
\caption{The Order of $ \sum\nolimits_{k=1}^{n}|\mathrm{EMST}(\mathcal{P}^\mathrm{I}_k)|$}
\label{tab-lower-bound-EMST}
\scalebox{0.75}
{
\begin{tabular}{|p{1.8cm}||p{3.85cm}|p{3.85cm}|p{3.85cm}|p{3.85cm}|p{3.85cm}|}
\hline
$\varphi$ $\backslash$ $\beta$ & $\beta>2$ & $\beta=2$ & $ 1<\beta<2 $ & $\beta=1$ & $0\leq\beta<1$ \\
\hline
\hline

$\varphi>\frac{3}{2}$ &
 $ \Theta(n)$ , $\gamma\geq0$; &
 $ \Theta(n \cdot \log n)$ , $\gamma\geq0$; &
 $ \Theta(n^{2-\frac{\beta}{2}})$ , $\gamma\geq0$; &
 $ \Theta(n^{\frac{3}{2}} / \sqrt{\log n})$ , $\gamma\geq0$; &
 $ \Theta(n^{\frac{3}{2}})$ , $\gamma\geq0$. \\
\hline

$\varphi=\frac{3}{2}$ & $\left\{
  \begin{array}{ll}
\Theta(n)  ,   \\
\qquad \gamma>1; \\
\Theta(n \cdot \log n)  ,  \\
 \qquad 0\leq\gamma\leq1.\\
  \end{array}
\right. $ &
$\left\{
  \begin{array}{ll}
\Theta(n \cdot \log n)  ,   \\
\qquad \gamma>1; \\
\Theta(n \cdot (\log n)^2)  ,  \\
 \qquad 0\leq\gamma\leq1.\\
  \end{array}
\right. $ &
$\left\{
  \begin{array}{ll}
\Theta(n^{2-\frac{\beta}{2}})  ,   \\
\qquad \gamma>1; \\
\Theta(n^{2-\frac{\beta}{2}} \cdot \log n)  ,  \\
 \qquad 0\leq\gamma\leq1.\\
  \end{array}
\right. $ &
$\left\{
  \begin{array}{ll}
\Theta(n^{\frac{3}{2}} / \sqrt{\log n})  ,   \\
\qquad \gamma>1; \\
\Theta(n^{\frac{3}{2}} \cdot \sqrt{\log n})  ,  \\
\qquad  0\leq\gamma\leq1.\\
  \end{array}
\right. $ &
$\left\{
  \begin{array}{ll}
\Theta(n^{\frac{3}{2}})  ,   \\
\qquad \gamma>1; \\
\Theta(n^{\frac{3}{2}} \cdot \log n)  ,  \\
\qquad 0\leq\gamma\leq1.\\
  \end{array}
\right. $  \\
\hline

$1<\varphi<\frac{3}{2}$ & $\left\{
  \begin{array}{ll}
\Theta(n)  ,   \\
\qquad \gamma>\frac{5}{2}-\varphi;\\
\Theta(n \cdot \log n) , \\
\qquad \gamma=\frac{5}{2}-\varphi;  \\
\Theta(n^{\frac{7}{2}-\gamma-\varphi}) , \\
 \qquad 1<\gamma<\frac{5}{2}-\varphi;  \\
\Theta(n^{\frac{5}{2}-\varphi} / \log n) , \\
 \qquad \gamma=1;  \\
\Theta(n^{\frac{5}{2}-\varphi})  ,  \\
 \qquad 0\leq\gamma<1.\\
  \end{array}
\right. $ &
$\left\{
  \begin{array}{ll}
\Theta(n \cdot \log n)  ,   \\
\qquad \gamma>\frac{5}{2}-\varphi;\\
\Theta(n \cdot (\log n)^2) , \\
\qquad \gamma=\frac{5}{2}-\varphi;  \\
\Theta(n^{\frac{7}{2}-\gamma-\varphi} \cdot \log n) , \\
\qquad 1<\gamma<\frac{5}{2}-\varphi;  \\
\Theta(n^{\frac{5}{2}-\varphi}) , \\
 \qquad \gamma=1;  \\
\Theta(n^{\frac{5}{2}-\varphi} \cdot \log n)  ,  \\
\qquad 0\leq\gamma<1.\\
  \end{array}
\right. $ &
$\left\{
  \begin{array}{ll}
\Theta(n^{2-\frac{\beta}{2}})  ,   \\
\qquad \gamma>\frac{5}{2}-\varphi;\\
\Theta(n^{2-\frac{\beta}{2}} \cdot \log n) , \\
\qquad \gamma=\frac{5}{2}-\varphi;  \\
\Theta(n^{\frac{9}{2}-\gamma-\varphi-\frac{\beta}{2}}) , \\
\qquad 1<\gamma<\frac{5}{2}-\varphi;  \\
\Theta(n^{\frac{7}{2}-\varphi-\frac{\beta}{2}} / \log n) , \\
\qquad \gamma=1;  \\
\Theta(n^{\frac{7}{2}-\varphi-\frac{\beta}{2}})  ,  \\
\qquad 0\leq\gamma<1.\\
  \end{array}
\right. $ &
$\left\{
  \begin{array}{ll}
\Theta(n^{\frac{3}{2}} / \sqrt{\log n})  ,   \\
\qquad \gamma>\frac{5}{2}-\varphi;\\
\Theta(n^{\frac{3}{2}} \cdot \sqrt{\log n}) , \\
\qquad \gamma=\frac{5}{2}-\varphi;  \\
\Theta(n^{4-\gamma-\varphi} / \sqrt{\log n}) , \\
\qquad 1<\gamma<\frac{5}{2}-\varphi;  \\
\Theta(n^{3-\varphi} / (\log n)^{\frac{3}{2}}) , \\
\qquad \gamma=1;  \\
\Theta(n^{3-\varphi} / \sqrt{\log n})  ,  \\
\qquad 0\leq\gamma<1.\\
  \end{array}
\right. $ &
$\left\{
  \begin{array}{ll}
\Theta(n^{\frac{3}{2}})  ,   \\
\qquad \gamma>\frac{5}{2}-\varphi;\\
\Theta(n^{\frac{3}{2}} \cdot \log n) , \\
\qquad \gamma=\frac{5}{2}-\varphi;  \\
\Theta(n^{4-\gamma-\varphi}) , \\
\qquad 1<\gamma<\frac{5}{2}-\varphi;  \\
\Theta(n^{3-\varphi} / \log n) , \\
\qquad \gamma=1;  \\
\Theta(n^{3-\varphi})  ,  \\
\qquad 0\leq\gamma<1.\\
  \end{array}
\right. $ \\
\hline

$\varphi=1$ & $\left\{
  \begin{array}{ll}
\Theta(n)  , \\
\qquad  \gamma\geq\frac{3}{2}; \\
\Theta(n^{\frac{5}{2}-\gamma} / \log n) , \\
 \qquad 1<\gamma<\frac{3}{2};  \\
\Theta(n^{\frac{3}{2}} / (\log n)^2) , \\
 \qquad \gamma=1;  \\
\Theta(n^{\frac{3}{2}} / \log n)  , \\
\qquad  0\leq\gamma<1.\\
  \end{array}
\right. $ &
$\left\{
  \begin{array}{ll}
\Theta(n \cdot \log n)  , \\
\qquad  \gamma\geq\frac{3}{2}; \\
\Theta(n^{\frac{5}{2}-\gamma}) , \\
 \qquad 1<\gamma<\frac{3}{2};  \\
\Theta(n^{\frac{3}{2}} / \log n) , \\
 \qquad \gamma=1;  \\
\Theta(n^{\frac{3}{2}})  , \\
\qquad  0\leq\gamma<1.\\
  \end{array}
\right. $ &
$\left\{
  \begin{array}{ll}
\Theta(n^{2-\frac{\beta}{2}})  , \\
\qquad  \gamma\geq\frac{3}{2}; \\
\Theta(n^{\frac{7}{2}-\gamma-\frac{\beta}{2}} / \log n) , \\
\qquad  1<\gamma<\frac{3}{2};  \\
\Theta(n^{\frac{5}{2}-\frac{\beta}{2}} / (\log n)^2) , \\
 \qquad \gamma=1;  \\
\Theta(n^{\frac{5}{2}-\frac{\beta}{2}} / \log n)  , \\
\qquad  0\leq\gamma<1.\\
  \end{array}
\right. $ &
$\left\{
  \begin{array}{ll}
\Theta(n^{\frac{3}{2}} / \sqrt{\log n})  , \\
\qquad  \gamma\geq\frac{3}{2}; \\
\Theta(n^{3-\gamma} / (\log n)^{\frac{3}{2}}) , \\
 \qquad 1<\gamma<\frac{3}{2};  \\
\Theta(n^{2} / (\log n)^\frac{5}{2}) , \\
\qquad  \gamma=1;  \\
\Theta(n^{2} / (\log n)^\frac{3}{2})  , \\
\qquad  0\leq\gamma<1.\\
  \end{array}
\right. $ &
$\left\{
  \begin{array}{ll}
\Theta(n^{\frac{3}{2}})  , \\
\qquad  \gamma\geq\frac{3}{2}; \\
\Theta(n^{3-\gamma} / \log n) , \\
\qquad  1<\gamma<\frac{3}{2};  \\
\Theta(n^{2} / (\log n)^2) , \\
\qquad  \gamma=1;  \\
\Theta(n^{2} / \log n)  , \\
\qquad  0\leq\gamma<1.\\
  \end{array}
\right. $ \\
\hline

$0\leq\varphi<1$ & $\left\{
  \begin{array}{ll}
\Theta(n)  , \\
\qquad  \gamma>\frac{3}{2}; \\
\Theta(n \cdot \log n)  , \\
\qquad  \gamma=\frac{3}{2}; \\
\Theta(n^{\frac{5}{2}-\gamma}) , \\
 \qquad 1<\gamma<\frac{3}{2};  \\
\Theta(n^{\frac{3}{2}} / \log n) , \\
\qquad  \gamma=1;  \\
\Theta(n^{\frac{3}{2}})  , \\
\qquad  0\leq\gamma<1.\\
  \end{array}
\right. $ &
$\left\{
  \begin{array}{ll}
\Theta(n \cdot \log n)  , \\
\qquad  \gamma>\frac{3}{2}; \\
\Theta(n \cdot (\log n)^2)  , \\
\qquad  \gamma=\frac{3}{2}; \\
\Theta(n^{\frac{5}{2}-\gamma} \cdot \log n) , \\
\qquad  1<\gamma<\frac{3}{2};  \\
\Theta(n^{\frac{3}{2}}) , \\
\qquad  \gamma=1;  \\
\Theta(n^{\frac{3}{2}} \cdot \log n)  , \\
\qquad  0\leq\gamma<1.\\
  \end{array}
\right. $ &
$\left\{
  \begin{array}{ll}
\Theta(n^{2-\frac{\beta}{2}})  , \\
\qquad  \gamma>\frac{3}{2}; \\
\Theta(n^{2-\frac{\beta}{2}} \cdot \log n)  , \\
\qquad  \gamma=\frac{3}{2}; \\
\Theta(n^{\frac{7}{2}-\gamma-\frac{\beta}{2}}) , \\
\qquad 1<\gamma<\frac{3}{2};  \\
\Theta(n^{\frac{5}{2}-\frac{\beta}{2}} / \log n) , \\
\qquad  \gamma=1;  \\
\Theta(n^{\frac{5}{2}-\frac{\beta}{2}})  , \\
\qquad  0\leq\gamma<1.\\
  \end{array}
\right. $ &
$\left\{
  \begin{array}{ll}
\Theta(n^{\frac{3}{2}} / \sqrt{\log n})  , \\
\qquad  \gamma>\frac{3}{2}; \\
\Theta(n^{\frac{3}{2}} \cdot \sqrt{\log n})  , \\
\qquad  \gamma=\frac{3}{2}; \\
\Theta(n^{3-\gamma} / \sqrt{\log n}) , \\
\qquad 1<\gamma<\frac{3}{2};  \\
\Theta(n^{2} / (\log n)^{\frac{3}{2}}) , \\
\qquad  \gamma=1;  \\
\Theta(n^{2} / \sqrt{\log n})  , \\
\qquad  0\leq\gamma<1.\\
  \end{array}
\right. $ &
$\left\{
  \begin{array}{ll}
\Theta(n^{\frac{3}{2}})  , \\
\qquad  \gamma>\frac{3}{2}; \\
\Theta(n^{\frac{3}{2}} \cdot \log n)  , \\
\qquad  \gamma=\frac{3}{2}; \\
\Theta(n^{3-\gamma}) , \\
\qquad  1<\gamma<\frac{3}{2};  \\
\Theta(n^{2} / \log n) , \\
\qquad  \gamma=1;  \\
\Theta(n^{2})  , \\
\qquad  0\leq\gamma<1.\\
  \end{array}
\right. $ \\
\hline
\end{tabular}
}
\end{table*}
\begin{proof}
Let $ T_{l} $ denote the number of users with $ l $ destinations. Then, by laws of larger numbers (LLN, Lemma \ref{lem-Kolmogorov-slln} in Appendix \ref{apeneidisx-ausouful-lmemas}), we have
\begin{equation*}
  T_{l}  =  n\cdot\mathrm{Pr}(q_{k}=l)  = n\cdot{\left(\sum\nolimits_{j=1}^{n-1}j^{-\gamma}\right)^{-1}} \cdot (l)^{-\gamma}.
\end{equation*}
For all sessions $\{\mathbb{D}^{\mathrm{I}}_k\}_{k=1}^{n}$, we define two sets:
 \begin{equation*}
 \mathcal{K}^{1}:=\{k|q_{k}=\Theta(1)\}, \mathcal{K}^{\infty}:=\{k|q_{k}=\omega(1)\}.
\end{equation*}
Then, it follows that
\begin{equation}\label{Transport-sum}
\sum\nolimits_{k=1}^{n}|\EMST(\mathcal{P}^\mathrm{I}_k)|  = \widehat{{\Sigma}}^{1} + \widehat{{\Sigma}}^{\infty},
\end{equation}
where
 \begin{equation*}
\widehat{{\Sigma}}^{1} = \sum_{k\in\mathcal{K}^{1}}|\EMST(\mathcal{P}^\mathrm{I}_k)|,
\widehat{{\Sigma}}^{\infty} = \sum_{k\in\mathcal{K}^{\infty}}|\EMST(\mathcal{P}^\mathrm{I}_k)|.
\end{equation*}
We first address the part of $\widehat{{\Sigma}}^{1}$. Since for $q_{k}=\Theta(1)$, it holds that
$ |\EMST(\mathcal{P}^\mathrm{I}_k)|=\Theta(|X-v_{k}|)$,
we then have
$ \widehat{{\Sigma}}^{1}=\sum\nolimits_{k\in\mathcal{K}^{1}}|X-v_{k}|. $
 For $ k\in\mathcal{K}^{1} $, we define a sequence of random variables
$ \phi^{1}_{k}:=|X-v_{k}|/\sqrt{n} $,
which have finite meaning as follows:
$ E[\phi^{1}_{k}] = E[|X-v_{k}|]/\sqrt{n},$
where $E[|X-v_{k}|]$ is given in Lemma 5 of \cite{mobihoc2014}.
Then, according to Lemma \ref{lem-growthsteel}, with probability $1$, it holds that
\[ \widehat{{\Sigma}}^{1}=\Theta \left(\sqrt{n}\cdot\sum\nolimits_{k\in\mathcal{K}^{1}}\phi^{1}_{k}\right),\]
and
$
 \sum\nolimits_{k\in\mathcal{K}^{1}}\phi^{1}_{k} / |\mathcal{K}^{1}|=\Theta(E\left[|X-v_{k}|/\sqrt{n}\right]),
$
where $ |\mathcal{K}^{1}| $ denotes the cardinality of set $ \mathcal{K}^{1} $.
Therefore,
\[ \sum\nolimits_{k\in\mathcal{K}^{1}}\phi^{1}_{k} = \Theta\left(|\mathcal{K}^{1}|\cdot E\left[|X-v_{k}|/\sqrt{n}\right]\right).\]
Then, by Lemma \ref{lem-Kolmogorov-slln} in Appendix \ref{apeneidisx-ausouful-lmemas}, with probability $1$, it holds that
\begin{equation}\label{Transport-for-k-1}
 \widehat{{\Sigma}}^{1}=\Theta\left(|\mathcal{K}^{1}| \cdot E[|X-v_{k}|] \right).
\end{equation}
Next, we consider $ \widehat{{\Sigma}}^{\infty} $. By introducing anchor points, all random variables $ |\EMST(\mathcal{P}^\mathrm{I}_k)| $ with $k \in \mathcal{K}^{\infty}$ are independent. For users with follower number in $ \mathcal{K}^{\infty} $, we define two sets:
\begin{equation*}
\mathcal{K}^{\infty}_{1} = \{k|d_{k} = \Theta\left(1\right)\}, \mathcal{K}^{\infty}_{\infty} = \{k|d_{k} = \omega\left(1\right)\}.
\end{equation*}
Then, it follows that
\begin{equation}\label{Transport-for-k-wuqiong}
\widehat{{\Sigma}}^{\infty} =  \widehat{{\Sigma}}_{1}^{\infty} + \widehat{{\Sigma}}_{\infty}^{\infty},
\end{equation}
where
\begin{equation*}
\widehat{{\Sigma}}_{1}^{\infty} = \sum_{k\in\mathcal{K}^{\infty}_{1}}|\EMST \left(\mathcal{P}^\mathrm{I}_k\right)|,
\widehat{{\Sigma}}_{\infty}^{\infty} = \sum_{k\in\mathcal{K}^{\infty}_{\infty}}| \EMST \left(\mathcal{P}^\mathrm{I}_k\right)|.
\end{equation*}
We first consider the $ \mathcal{K}^{\infty}_{1} $. For $ d_{k} = \Theta\left(1\right) $, the order of $ \widehat{{\Sigma}}_{1}^{\infty} $ is lower than that of $ \widehat{{\Sigma}}^{1} $. For the final summation, the specific value of $ \widehat{{\Sigma}}_{1}^{\infty} $ is relatively infinitesimal.

Next, we consider $ \mathcal{K}^{\infty}_{\infty} $.
According to Lemma \ref{lem-EMST-B-D}, with probability $1$, it holds that
\begin{equation*}
|\mathrm{EMST\left(\mathcal{P}^\mathrm{I}_k\right)}| = \Theta\left(L^\mathrm{I}_{\mathcal{P}}(\beta, d_k)\right),
\end{equation*}
where $L^\mathrm{I}_{\mathcal{P}}(\beta, d_k)$ is defined in Eq.(\ref{eq-L-P-beta-9765}). And using LLN (Lemma \ref{lem-Kolmogorov-slln}), with probability $1$, it follows that
\begin{equation}\label{Transport-for-d-wuqiong}
\widehat{{\Sigma}}_{\infty}^{\infty} = \sum\nolimits_{l=2}^{n-1} \sum\nolimits_{d=1}^{l} T_{l} \cdot \mathrm{Pr}(d_{k}=d|q_{k}=l) \cdot L^\mathrm{I}_{\mathcal{P}}\left(\beta,d\right).
\end{equation}
Combining with Eq.(\ref{Transport-sum}), Eq.(\ref{Transport-for-k-1}), Eq.(\ref{Transport-for-k-wuqiong}) and Eq.(\ref{Transport-for-d-wuqiong}), we complete the proof of Theorem \ref{lem-delta-0-all-est}.
\end{proof}

\subsubsection{Tight Bounds on Transport Complexity}
Firstly, we give the main result (Theorem \ref{thm-tandsmport-cmomopelxsixyt-tihgoht-bounds}), i.e., the scaling laws of transport complexity, and prove them tight bounds on the transport complexity by deriving lower bounds (Lemma \ref{Z-D-thm-transport-load-on-sessions}) and computing upper bounds (Lemma \ref{dsfds-thm-upersdnmbpounds-on-sessions}), respectively.

\begin{thm}\label{thm-tandsmport-cmomopelxsixyt-tihgoht-bounds}
Let $\mathrm{C}_{\mathbb{N}}^{\mathrm{I}}$ denote the transport complexity
for Social-InterestCast sessions in a large-scale OSN over the carrier network with optimal communication architecture.
Then, it holds that
\begin{equation*}
  \mathrm{C}_{\mathbb{N}}^{\mathrm{I}}=\Theta\left(G\left(\beta, \gamma, \varphi\right)\right),
\end{equation*}
where $ G\left(\beta, \gamma, \varphi\right) $
depends on  the clustering exponents of relationship degree $\gamma$, relationship formation $\beta$ and dissemination pattern $ \varphi $, and the value
is presented in Table \ref{G-tab-total} in Appendix \ref{complete-main-reslts}.
 \end{thm}

Note that the results in Table \ref{G-tab-total} synthetically depend on three parameters, i.e., $\gamma$, $\beta$, and $\varphi$.
This contributes to the informative and comprehensive form of results.
For carding the flow and improving the readability, we move the detailed results to  Appendix \ref{complete-main-reslts} (Table \ref{G-tab-total}).
We will provide an intuitive explanation and discussion on the results in Section \ref{Z-D-subsub-explaination-social-dissemination-results}.

In the following proofs, we let $\mathrm{L}_{\mathbb{N}}^{\mathrm{I}}$ denote the transport complexity for all data dissemination sessions in OSN $\mathbb{N}$.

\textbf{Lower Bounds on Transport Complexity:}
The following Lemma \ref{Z-D-thm-transport-load-on-sessions} demonstrates a lower bound on transport complexity for OSN $\mathbb{N}$.
\begin{lem}\label{Z-D-thm-transport-load-on-sessions}
For the Social-InterestCast with the Zipf's distribution as defined in Eq.(\ref{zipf-final-destination}), it holds that
\begin{equation*}
  \mathrm{L}_{\mathbb{N}}^{\mathrm{I}}=\Omega\left(G\left(\beta, \gamma, \varphi\right)\right),
\end{equation*}
where $ G\left(\beta, \gamma, \varphi\right) $
is presented in Table \ref{G-tab-total} in Appendix \ref{complete-main-reslts}.
\end{lem}
%
%
%
%
%
%
%
\begin{proof}
Since
\begin{equation*}
\sum_{k=1}^{n} d_{k} = \Theta\left(\sum_{l=1}^{n-1} \sum_{d=1}^{l} T_{l} \cdot \mathrm{Pr}(d_{k}=d|q_{k}=l) \cdot d \right),
\end{equation*}
we can get that $ \sum\nolimits_{k=1}^{n} d_{k} = W\left(\gamma, \varphi\right),$ where $ W\left(\gamma, \varphi\right) $  is described in  Table \ref{tab-W-gamma-varphi}.
\begin{table}[!h]
\renewcommand{\arraystretch}{1.2}
\centering
\caption{The Number of All Destinations, $W\left(\gamma, \varphi\right)$}
\label{tab-W-gamma-varphi}
\scalebox{0.85}
{
\renewcommand\arraystretch{1.2}
\begin{tabular}{|p{2cm} ||p{5.9cm}|}
 \hline
$\varphi$ & $ W\left(\gamma, \varphi\right) $\\
\hline
\hline
$\varphi>2$ &
 $ \Theta(n) $  , $ \gamma\geq0.$ \\
\hline
$\varphi=2$ &
 $\left\{
  \begin{array}{ll}
\Theta(n)  , &  \gamma >1; \\
\Theta(n \cdot \log n)  , &  0\leq\gamma\leq1.
  \end{array}
\right. $ \\
\hline
$1<\varphi<2$ &
 $\left\{
  \begin{array}{ll}
\Theta(n) , &  \gamma>3-\varphi;  \\
\Theta(n \cdot \log n) , &  \gamma=3-\varphi; \\
\Theta(n^{4-\gamma-\varphi})  , &  1<\gamma<3-\varphi; \\
\Theta(n^{3-\varphi} / \log n) , &  \gamma=1; \\
\Theta(n^{3-\varphi})  , &  0\leq\gamma<1.
  \end{array}
\right. $ \\
\hline
$\varphi=1$ &
 $\left\{
  \begin{array}{ll}
\Theta(n)  , &  \gamma\geq2;\\
\Theta(n^{3-\gamma}/ \log n) , &  1<\gamma<2;  \\
\Theta(n^{2} / (\log n)^2) , &  \gamma=1;  \\
\Theta(n^{2}/ \log n)  , &  0\leq\gamma<1.
  \end{array}
\right. $ \\
\hline
$0\leq\varphi<1$ &
 $\left\{
  \begin{array}{ll}
\Theta(n)  , &  \gamma>2; \\
\Theta(n \cdot \log n)  , &  \gamma=2; \\
\Theta(n^{3-\gamma})  , &  1<\gamma<2; \\
\Theta(n^{2}/\log n) , &  \gamma=1;  \\
\Theta(n^{2})  , &  0\leq\gamma<1.
  \end{array}
\right. $ \\
\hline
\end{tabular}
}
\end{table}
Additionally, for all $ v_{k} \in \mathcal{V}$, 
\begin{equation*}
  E[| v_{k_{i}} - p_{k_{i}} |] = \Theta\left( \int_{0}^{\sqrt{n}} x \cdot e^{-\pi \cdot x^{2}}dx \right).
\end{equation*}
That is, $ E[| v_{k_{i}} - p_{k_{i}} |] = \Theta \left(1\right).$
Therefore, according to Lemma \ref{lem-Kolmogorov-slln}, with probability $1$, it follows that
\begin{equation}\label{Length-Anchor-Friends}
  \sum\nolimits_{k=1}^{n} \sum\nolimits_{i=1}^{d_{k}} |v_{k_{i}}-p_{k_{i}}| = \Theta \left(\sum\nolimits_{k=1}^{n} d_{k}\right) = \Theta\left(W\left(\gamma, \varphi\right)\right).
\end{equation}
Combining with Theorem \ref{lem-delta-0-all-est}, for all Social-InterestCast sessions $\mathbb{D}^{\mathrm{I}}_k$, $ k = 1, 2, ..., n $, with high probability, it holds that
\[ \sum\nolimits_{k=1}^n| \mathrm{EMST}(\mathbb{D}^{\mathrm{I}}_k) | = \Omega \left(G\left(\beta, \gamma, \varphi\right)\right) ,
\]
where $G\left(\beta, \gamma, \varphi\right)$ is provided in Table \ref{G-tab-total} in Appendix \ref{complete-main-reslts}.
Then, we complete the proof of Lemma \ref{Z-D-thm-transport-load-on-sessions}.
\end{proof}

\textbf{Upper Bounds on Transport Complexity:}
Here, we analyze the upper bounds on transport complexity
for Social-InterestCast sessions over an \emph{optimal} carrier network, i.e.,
a dedicated carrier communication network for this application.
In such a carrier network, a dedicated link can be built for every link in $\mathrm{EMST}(\mathcal{P}^\mathrm{I}_k)$, $k=1,2,\cdots, n$.
Therefore, from Lemma \ref{lem-EMST-B-D}  and Theorem \ref{lem-delta-0-all-est}, for Social-InterestCast sessions,
the aggregated transport distances of each session and all sessions can respectively reach to the orders as presented in  Eq.(\ref{eq-L-P-beta-9765}) and Table \ref{tab-lower-bound-EMST}. Then, we have
\begin{lem}\label{dsfds-thm-upersdnmbpounds-on-sessions}
For the Social-InterestCast with the Zipf's distribution as defined in Eq.(\ref{zipf-final-destination}), it holds that
\begin{equation*}
  \mathrm{L}_{\mathbb{N}}^{\mathrm{I}}=O\left(G\left(\beta, \gamma, \varphi\right)\right),
\end{equation*}
where $ G\left(\beta, \gamma, \varphi\right) $
is presented in Table \ref{G-tab-total} in Appendix \ref{complete-main-reslts}.
\end{lem}

\subsubsection{Explanation of Results}\label{Z-D-subsub-explaination-social-dissemination-results}
Based on the complete results in Table \ref{G-tab-total} in Appendix \ref{complete-main-reslts},
we can observe that
the final results vary in the range $\left[\Theta(n), \Theta(n^{2})\right]$, when the parameters $\gamma$, $\beta$, and $\varphi$ have different values in the range $\left[0, +\infty\right)$.

For Table \ref{G-tab-total}, it is too complex to provide a clear insight on the results.
To facilitate the understanding of our results, we choose a case of results and make them visualized. Figure \ref{fig-result-gamma3/2} illustrates the results for the case   $1<\beta<2$ and $1<\varphi<\frac{3}{2}$.

\begin{figure}[t]
\begin{center}
\scalebox{0.95}{\includegraphics{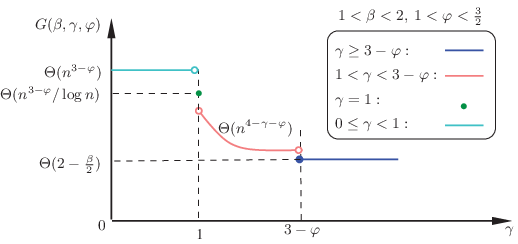}}
\caption{The Order of Lower Bounds on The Derived Transport
Complexity, $G\left(\beta, \gamma, \varphi\right)$, under $1<\beta<2$ and $1<\varphi<{3}/{2}$.} \label{fig-result-gamma3/2}
\end{center}
\vspace{-0.2in}
\end{figure}

Next, we mainly discuss the impacts of the clustering exponents of relationship degree $\gamma$, relationship formation $\beta$ and dissemination pattern $\varphi$ on the transport complexity.
Our result demonstrates that the transport complexity for Social-InterestCast is \emph{non-increasing}  within the range $\left[\Theta(n), \Theta(n^2)\right]$ in terms of the parameters  $\gamma$,  $\beta$, and  $\varphi$.
An intuitive explanation for this impact can be provided as follows:
Under each Social-InterestCast,
a larger clustering exponent of relationship degree $\gamma$ can limit the number of friends of each user into smaller upper bound with high probability, then leads to a lower transport complexity; a larger clustering exponent of relationship formation $\beta$ makes the followers more closer to each user with high probability, then possibly reduces the total transport distance of each Social-InterestCast session, finally also leads to a lower transport complexity;
a larger clustering exponent of dissemination pattern  $\varphi$ leads to a smaller probability that the source chooses larger number of destinations from its followers, which leads to a lower transport load.

\section{Conclusion and Future Work}\label{sec-conclud-future}
To measure the transport difficulty for data dissemination in online social networks (OSNs), we have defined a new metric, called transport complexity.
To model the formation of the interest-driven social session, i.e., Social-InterestCast,  we have proposed a four-layered architecture to model the data dissemination in OSNs, including the physical layer, social layer, content layer, and session layer.
By analyzing mutual relevances among these four layers, we have obtained the geographical distribution characteristics of dissemination sessions in OSNs. We have presented the density function of general social relationship distribution and the general form of bounds on transport load  for large-scale OSNs.
Furthermore, we have derived the tight  bounds on transport complexity of Social-InterestCast.

Much work still remains to be done. Main issues are listed as follows:

$\rhd$ In our work, although we have assumed that users are static, it is also applicable to the mobile scenario where
each mobile user moves within a bounded distance from exact one home point. Actually, in our numerical evaluation, we have made the most visited point of each mobile user as his static location. However, mobile users in real-life scenario are usually constrained by more than one home point rather than exact one home point. Therefore, it is necessary to further take into account a more realistic model for users distribution in the physical deployment layer, such as Multi-center Gaussian Model (MGM) in \cite{mgm2012}.

$\rhd$ We have provided only the explicit result for the model with homogeneous geographical distribution of users. This cannot still highlight sufficiently the characteristics of real-life OSNs or the advantages of the population-distance-based model.

$\rhd$ Under the profile \& social-based information dissemination pattern, a subsequent traffic session initiated from a source is often triggered by the previous session from another source. However, we have focused exclusively on the data arrival model without considering the correlations of data generating processes.

$\rhd$ When applying the metric \emph{transport complexity} to wireless broadcast, the definition will overestimate the transport difficulty for data dissemination in OSNs. To be specific, in some scenarios, data targeted to multiple destinations can be transmitted by a simple wireless broadcast, while our metric overly accumulate the transport distance of such a dissemination.
Even so, our results are still reasonable based on the explanation as follows: For a data dissemination, the distance of the last hop is relatively infinitesimal to the aggregate distance of this dissemination.
What's more, for the feature of scaling laws issue, we only care about the order of transport load imposed on the carrier communication networks of OSNs.
Anyway, it is a significant work to seek for a more accurate and practical metric to measure the transport difficulty of data dissemination in OSNs.

\bibliographystyle{abbrv}
\bibliography{osn}

\begin{thebibliography}{10}

\bibitem{Statista}
{http://www.statista.com/statistics/278414/number-of-worldwide-social-network-users/}.

\bibitem{Globalwebindex}
{https://www.globalwebindex.net/blog/daily-time-spent-on-social-networks-rises-to-1-72-hours}.

\bibitem{brightkite2010}
{http://snap.stanford.edu/data/loc-brightkite.html}.

\bibitem{index2014global}
Cisco visual networking index: Global mobile data traffic forecast update,
  2013-2018.

\bibitem{shareaholic2013}
Shareaholic's search traffic vs. social referrals report.
\newblock
  \url{https://blog.shareaholic.com/search-traffic-social-referrals-12-2013/}.

\bibitem{alfano2009capacity}
G.~Alfano, M.~Garetto, and E.~Leonardi.
\newblock Capacity scaling of wireless networks with inhomogeneous node
  density: Upper bounds.
\newblock {\em IEEE Journal on Selected Areas in Communications},
  27(7):1147--1157, 2009.

\bibitem{altman2013stochastic}
E.~Altman.
\newblock A stochastic game approach for competition over popularity in social
  networks.
\newblock {\em Dynamic Games and Applications}, 3(2):313--323, 2013.

\bibitem{amble2011content}
M.~M. Amble, P.~Parag, S.~Shakkottai, and L.~Ying.
\newblock Content-aware caching and traffic management in content distribution
  networks.
\newblock In {\em Proc. IEEE INFOCOM 2011}.

\bibitem{azimdoost2012globecomcapacity}
B.~Azimdoost and H.~Sadjadpour.
\newblock Capacity of scale free wireless networks.
\newblock In {\em Proc. IEEE GlobeCom 2012}.

\bibitem{bakshy2012role}
E.~Bakshy, I.~Rosenn, C.~Marlow, and L.~Adamic.
\newblock The role of social networks in information diffusion.
\newblock In {\em Proc. ACM WWW 2012}.

\bibitem{Bao2012Location}
J.~Bao, Y.~Zheng, and M.~F. Mokbel.
\newblock Location-based and preference-aware recommendation using sparse
  geo-social networking data.
\newblock {\em Gis}, pages 199--208, 2012.

\bibitem{benevenuto2009characterizing}
F.~Benevenuto, T.~Rodrigues, M.~Cha, and V.~Almeida.
\newblock Characterizing user behavior in online social networks.
\newblock In {\em Proc. ACM IMC 2009}.

\bibitem{Blei2003Latent}
D.~M. Blei, A.~Y. Ng, M.~I. Jordan, and J.~Lafferty.
\newblock Latent dirichlet allocation.
\newblock {\em Journal of Machine Learning Research}, 3:993--1022, 2003.

\bibitem{borghol2012untold}
Y.~Borghol, S.~Ardon, N.~Carlsson, D.~Eager, and A.~Mahanti.
\newblock The untold story of the clones: content-agnostic factors that impact
  youtube video popularity.
\newblock In {\em Proc. ACM SIGKDD 2012}.

\bibitem{mgm2012}
I.~K. C.~Cheng, H.~Yang and M.~R. Lyu.
\newblock Fused matrix factorization with geographical and social influence in
  location-based social networks.
\newblock In {\em Proc. AAAI 2012}.

\bibitem{chen2013technological}
K.-C. Chen, M.~Chiang, and H.~V. Poor.
\newblock From technological networks to social networks.
\newblock {\em IEEE Journal on Selected Areas in Communications},
  31(31):548--572, 2013.

\bibitem{chen2013social}
X.~Chen, B.~Proulx, X.~Gong, and J.~Zhang.
\newblock Social trust and social reciprocity based cooperative {D2D}
  communications.
\newblock In {\em Proc. ACM MobiHoc 2013}.

\bibitem{Brightkite2007}
E.~Cho, S.~A. Myers, and J.~Leskovec.
\newblock Friendship and mobility: User movement in location-based social
  networks.
\newblock In {\em Proc. ACM SIGKDD 2011}.

\bibitem{dhillon2002enhanced}
I.~S. Dhillon, S.~Mallela, and R.~Kumar.
\newblock Enhanced word clustering for hierarchical text classification.
\newblock In {\em Proc. ACM SIGKDD 2002}.

\bibitem{evolution2014}
L.~Fu, J.~Zhang, and X.~Wang.
\newblock Evolution-cast: Temporal evolution in wireless social networks and
  its impact on capacity.
\newblock {\em IEEE Transactions on Parallel and Distributed Systems},
  25(10):2583--2594, 2014.

\bibitem{guille2013information}
A.~Guille, H.~Hacid, C.~Favre, and D.~A. Zighed.
\newblock Information diffusion in online social networks: A survey.
\newblock {\em ACM SIGMOD Record}, 42(2):17--28, 2013.

\bibitem{Hossmann2011}
T.~Hossmann, T.~Spyropoulos, and F.~Legendre.
\newblock Putting contacts into context: Mobility modeling beyond inter-contact
  times.
\newblock In {\em Proc. ACM MobiHoc 2011}.

\bibitem{diffusion2013}
Y.~Jin, J.~Ok, Y.~Yi, and J.~Shin.
\newblock On the impact of global information on diffusion of innovations over
  social networks.
\newblock In {\em Proc. IEEE INFOCOM 2013}.

\bibitem{kandhwaycampaigning}
K.~Kandhway and J.~Kuri.
\newblock Campaigning in heterogeneous social networks: Optimal control of si
  information epidemics.
\newblock {\em IEEE/ACM Transactions on Networking}, PP(99), 2014.

\bibitem{li2014understanding}
H.~Li, X.~Cheng, and J.~Liu.
\newblock Understanding video sharing propagation in social networks:
  Measurement and analysis.
\newblock {\em Acm Transactions on Multimedia Computing Communications \&
  Applications}, 10(4):1--20, 2014.

\bibitem{li2012modeling}
Y.~Li, B.~Q. Zhao, and J.~Lui.
\newblock On modeling product advertisement in large-scale online social
  networks.
\newblock {\em IEEE/ACM Transactions on Networking}, 20(5):1412--1425, 2012.

\bibitem{liu2013soco}
X.~Liu and K.~Aberer.
\newblock Soco: a social network aided context-aware recommender system.
\newblock In {\em Proc. ACM WWW 2013}.

\bibitem{manning1999foundations}
C.~Manning and H.~Sch{\"u}tze.
\newblock {\em Foundations of statistical natural language processing}.
\newblock MIT press, 1999.

\bibitem{mathioudakis2010twittermonitor}
M.~Mathioudakis and N.~Koudas.
\newblock Twittermonitor: trend detection over the twitter stream.
\newblock In {\em Proc. ACM SIGMOD 2010}.

\bibitem{merton1968matthew}
R.~K. Merton et~al.
\newblock The matthew effect in science.
\newblock {\em Science}, 159(3810):56--63, 1968.

\bibitem{myers2012information}
S.~A. Myers, C.~Zhu, and J.~Leskovec.
\newblock Information diffusion and external influence in networks.
\newblock In {\em Proc. ACM SIGKDD 2012}.

\bibitem{milcom2010twitter}
R.~Perera, S.~Anand, K.~Subbalakshmi, and R.~Chandramouli.
\newblock Twitter analytics: Architecture, tools and analysis.
\newblock In {\em Proc. IEEE MILCOM 2010}.

\bibitem{identi2013leader}
M.~Shafiq, M.~Ilyas, A.~Liu, and H.~Radha.
\newblock Identifying leaders and followers in online social networks.
\newblock {\em IEEE Journal on Selected Areas in Communications},
  31(9):618--628, 2013.

\bibitem{song2012clustered}
H.~H. Song, B.~Savas, T.~W. Cho, V.~Dave, Z.~Lu, I.~S. Dhillon, Y.~Zhang, and
  L.~Qiu.
\newblock Clustered embedding of massive social networks.
\newblock In {\em Proc. ACM SIGMETRICS PER 2012}.

\bibitem{steele1988growth}
J.~Steele.
\newblock Growth rates of {Euclidean} minimal spanning trees with power
  weighted edges.
\newblock {\em The Annals of Probability}, 16(4):1767--1787, 1988.

\bibitem{social2012com}
M.~Tavakolifard and K.~Almeroth.
\newblock Social computing: an intersection of recommender systems,
  trust/reputation systems, and social networks.
\newblock {\em IEEE Network}, 26(4):53--58, 2012.

\bibitem{mobihoc2014}
C.~Wang, S.~Tang, L.~Yang, Y.~Guo, F.~Li, and C.~Jiang.
\newblock Modeling data dissemination in online social networks: A geographical
  perspective on bounding network traffic load.
\newblock In {\em Proc. ACM MobiHoc 2014}.

\bibitem{williams1991probability}
D.~Williams.
\newblock {\em Probability with martingales}.
\newblock Cambridge university press, 1991.

\bibitem{yagan2013conjoining}
O.~Yagan, D.~Qian, J.~Zhang, and D.~Cochran.
\newblock Conjoining speeds up information diffusion in overlaying
  social-physical networks.
\newblock {\em IEEE Journal on Selected Areas in Communications},
  31(6):1038--1048, 2013.

\bibitem{yan2014mining}
M.~Yan, J.~Sang, and C.~Xu.
\newblock Mining cross-network association for youtube video promotion.
\newblock In {\em Proc. ACM Multimedia 2014}.

\bibitem{Bayesian2012}
X.~Yang, Y.~Guo, and Y.~Liu.
\newblock Bayesian-inference-based recommendation in online social networks.
\newblock {\em IEEE Transactions on Parallel and Distributed Systems},
  24(4):642--651, 2012.

\bibitem{ye2013predicting}
J.~Ye, H.~Cheng, Z.~Zhu, and M.~Chen.
\newblock Predicting positive and negative links in signed social networks by
  transfer learning.
\newblock In {\em Proc. ACM WWW 2013}.

\bibitem{zhou2010impact}
R.~Zhou, S.~Khemmarat, and L.~Gao.
\newblock The impact of youtube recommendation system on video views.
\newblock In {\em Proc. ACM IMC 2010}.

\end{thebibliography}

\clearpage
\appendix

\section{Some Useful Lemmas}\label{apeneidisx-ausouful-lmemas}
We provide some useful lemmas as follows:
\begin{lem}[Minimal Spanning Tree \cite{steele1988growth}]\label{lem-growthsteel}
Let $X_i$, $1\leq i <\infty$, denote independent random variables with
values in $\mathbb{R}^d$, $d\geq 2$, and let $M_n$ denote the cost of
a minimal spanning tree of a complete graph with vertex set $\{X_i\}_{i=1}^n$, where the cost of an edge $(X_i, X_j)$ is given by $\Psi((|X_i-X_j|))$.
Here, $|X_i-X_j|$ denotes the Euclidean distance between $X_i$ and $X_j$ and
$\Psi$ is a monotone function. For bounded random variables and $0<\sigma<d$,
it holds that as $n\to \infty$, with probability $1$, one has
\[
M_n\sim c_1(\sigma,d)\cdot n^{\frac{d-\sigma}{d}} \cdot \int_{\mathbb{R}^d}f(X)^{\frac{d-\sigma}{d}}d X,\]
provided $\Psi(x)\sim x^\sigma$, where $f(X)$ is the density of the absolutely continuous part of the distribution of the $\{X_i\}$.
\end{lem}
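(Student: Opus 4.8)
The plan is to establish Lemma~\ref{lem-growthsteel} by the subadditive Euclidean functional method, in the tradition of Beardwood--Halton--Hammersley and its refinements by Steele and Yukich. Write $L(\{x_i\};Q)$ for the power-weighted minimal spanning tree cost, with edge weight $\Psi(|x_i-x_j|)\sim|x_i-x_j|^\sigma$, on a finite point set lying in a cube $Q$. The steps, in order, would be: (i) reduce to i.i.d.\ uniform points on the unit cube with $\Psi(x)=x^\sigma$ exactly; (ii) establish near sub- and super-additivity of $L$ under cube subdivision; (iii) deduce convergence of the rescaled mean to a finite positive constant $c(\sigma,d)$; (iv) upgrade this to almost-sure convergence via a variance/concentration bound; and (v) transfer from the uniform law to a general density $f$ by a step-function approximation, which is what produces the integral $\int f^{(d-\sigma)/d}$.

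For step (i), monotonicity of $\Psi$ together with $\Psi(x)\sim x^\sigma$ and the boundedness of the $X_i$ let one sandwich $L$ between constant multiples of the exact $x^\sigma$-functional at all relevant scales, so the leading constant is unaffected and it suffices to treat $\Psi(x)=x^\sigma$. For step (ii), \emph{subadditivity} is the easy half: partition $[0,t]^d$ into $m^d$ congruent subcubes $Q_j$, take an MST inside each, and stitch the subtrees together along a tour of the subcube centres; since $\sigma<d$ these stitching edges contribute only $O(t^\sigma m^{d-\sigma})$, of lower order than $L$ itself, so $L(\cdot;[0,t]^d)\le\sum_j L(\cdot;Q_j)+O(t^\sigma m^{d-\sigma})$. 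For the matching lower bound I would introduce the \emph{boundary functional} $L_B$, in which spanning-forest components may be joined to one another ``for free'' through $\partial[0,t]^d$; $L_B$ is genuinely superadditive under subdivision, and one shows that $0\le L-L_B$ is bounded by the cost of reattaching the components incident to the boundary, again of lower order. Feeding these inequalities into the standard subadditive-process lemma (with a Poissonization step so that subcube point counts become independent) yields $\mathbf{E}[L(\{U_i\}_{i=1}^{n};[0,1]^d)]\sim c(\sigma,d)\,n^{(d-\sigma)/d}$; finiteness of $c(\sigma,d)$ comes from a space-filling-curve upper bound and positivity from the typical nearest-neighbour distance being $\Theta(n^{-1/d})$.

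For step (iv), moving one of the $n$ points changes $L$ by only a bounded amount, and in fact only through a few short edges, so an Efron--Stein variance estimate (or McDiarmid's inequality, which is clean when $\sigma<d/2$ and needs a short-edge truncation otherwise) gives $\Var(L)=o(\mathbf{E}[L]^2)$; Chebyshev along a polynomially sparse subsequence $n_k$, Borel--Cantelli, and monotonicity of $L$ in $n$ then upgrade the mean asymptotics to the almost-sure statement. For step (v), decompose $f=f_{\mathrm{ac}}+f_{\mathrm{sing}}$: the singular part sits on a Lebesgue-null set and contributes only a lower-order amount to $L$, while $f_{\mathrm{ac}}$ is approximated from above and below by step functions $f_\delta$ constant on the cells of a mesh-$\delta$ grid. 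Applying the uniform result cell by cell --- rescaling each side-$\delta$ cell to the unit cube multiplies its contribution by $\delta^\sigma$ and makes its local point count $\approx n f_\delta\delta^d$ --- gives the Riemann sum $c(\sigma,d)\,n^{(d-\sigma)/d}\sum_{\text{cells}}f_\delta^{(d-\sigma)/d}\delta^d$, which tends to $c(\sigma,d)\,n^{(d-\sigma)/d}\int f^{(d-\sigma)/d}$ as $\delta\to0$; the inter-cell connecting edges are lower order, and sandwiching $f$ between step functions closes the argument.

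The principal obstacle is the super-additivity step: designing the boundary functional $L_B$ and, above all, showing that $L-L_B$ is genuinely of smaller order requires a careful count of how many spanning-forest components can be incident to the subcube boundaries and of the price of reattaching them --- this is the technical heart of the scheme. A secondary difficulty is the almost-sure (rather than in-probability) upgrade when $\sigma$ is close to $d$, where the crude bounded-differences bound is too weak and one must extract extra cancellation from the shortness of the edges affected by a single point.
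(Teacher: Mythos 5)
This lemma is not proved in the paper at all: it is quoted verbatim as a known tool from Steele \cite{steele1988growth} (Appendix, ``Some Useful Lemmas''), and the paper's results only \emph{apply} it (via Lemma~\ref{lem-delta-0-emst-indefinited-998} and the sums over sessions). So there is no in-paper proof to compare against; what you have written is a reconstruction of the literature proof itself. Judged on that basis, your outline is the canonical route --- reduction to the pure power weight $x^\sigma$, near-subadditivity under cube subdivision, superadditivity of a boundary functional, Poissonized mean asymptotics, a variance/concentration bound for almost-sure convergence, and a step-function (Riemann-sum) transfer from the uniform law to a general density, which is exactly where $\int f^{(d-\sigma)/d}$ comes from. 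This matches the Steele/Redmond--Yukich theory of subadditive Euclidean functionals, and you correctly identify the boundary functional and the component-reattachment estimate as the technical heart.

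Two concrete points need repair before the sketch could stand as a proof. First, your almost-sure upgrade invokes ``monotonicity of $L$ in $n$,'' but the MST cost is \emph{not} monotone in the number of points: adding a point can strictly decrease the total weight (this is precisely the Steiner effect the paper exploits in Lemma~\ref{lem-basic-emst-load}). The standard fix is a smoothness (add-one/delete-one) bound, $|M_{n+1}-M_n|$ controlled by a power of the nearest-neighbour distance, used to interpolate between the sparse subsequence indices; and, as you partly acknowledge, for $\sigma$ close to $d$ the crude $O(1)$ bounded-differences estimate is too weak, so the Efron--Stein argument with the refined short-edge bound is not optional but essential to cover the whole range $0<\sigma<d$ claimed in the lemma. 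Second, the assertion that the singular part of the distribution ``contributes only a lower-order amount'' is itself a nontrivial theorem about these functionals and cannot be waved through; it needs either an explicit argument (covering the singular mass by cubes of small total volume and using the subadditive upper bound) or a citation. With those two gaps filled, your plan is faithful to the cited source, but for the purposes of this paper the economical course is simply to cite \cite{steele1988growth}, as the authors do.
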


\begin{lem}[Kolmogorov's Strong LLN \cite{williams1991probability}]\label{lem-Kolmogorov-slln}

Let $\{X_n\}$  be an i.i.d. sequence of random variables having finite mean: For $\forall n$,
$\mathbf{E}[X_n] < \infty$.
Then, a strong law of large numbers (LLN) applies to the sample mean:
\[\bar{X}_n \stackrel{a.s.}{\longrightarrow} \mathbf{E}[X_n],\]
where  $\stackrel{a.s.}{\longrightarrow}$ denotes \emph{almost sure convergence}.
\end{lem}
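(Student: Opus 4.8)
The plan is to prove this classical result by the \emph{truncation method}, reading the hypothesis ``$\mathbf{E}[X_n]<\infty$'' as integrability $\mathbf{E}|X_1|<\infty$ and writing $\mu:=\mathbf{E}[X_1]$ for the common mean (well defined since the $X_n$ are identically distributed). First I would truncate at level $k$: set $Y_k:=X_k\,\mathbf{1}\{|X_k|\le k\}$. The summability
\[
\sum\nolimits_{k\ge1}\Pr(X_k\ne Y_k)=\sum\nolimits_{k\ge1}\Pr(|X_1|>k)\le \mathbf{E}|X_1|<\infty,
\]
combined with the Borel--Cantelli lemma, shows that $X_k=Y_k$ for all sufficiently large $k$ almost surely. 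Hence $\bar X_n$ and $\tfrac1n\sum_{k=1}^n Y_k$ share the same almost-sure limit, and it suffices to analyze the truncated sequence.

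The core step is to establish $\tfrac1n\sum_{k=1}^n\bigl(Y_k-\mathbf{E}[Y_k]\bigr)\stackrel{a.s.}{\longrightarrow}0$. For this I would control the scaled variances by interchanging summation and expectation:
\[
\sum\nolimits_{k\ge1}\frac{\Var(Y_k)}{k^2}\le\sum\nolimits_{k\ge1}\frac{1}{k^2}\,\mathbf{E}\!\bigl[X_1^2\,\mathbf{1}\{|X_1|\le k\}\bigr]\le C\,\mathbf{E}|X_1|<\infty,
\]
where the final inequality uses the tail bound $\sum_{k\ge m}k^{-2}\le C/m$ after exchanging the order of summation and integration. Given this, Kolmogorov's one-series theorem (itself proved through Kolmogorov's maximal inequality) guarantees that $\sum_{k\ge1}\frac{Y_k-\mathbf{E}[Y_k]}{k}$ converges almost surely, and Kronecker's lemma then upgrades this convergence to the statement that the Cesàro averages tend to zero, which is precisely the claim.

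Finally I would recover the mean. Since $Y_k\to X_1$ pointwise with $|Y_k|\le|X_1|$, dominated convergence gives $\mathbf{E}[Y_k]\to\mu$, so the Cesàro average $\tfrac1n\sum_{k=1}^n\mathbf{E}[Y_k]$ converges to $\mu$ as well. Combining the two preceding steps yields $\tfrac1n\sum_{k=1}^n Y_k\to\mu$ almost surely, and by the truncation reduction $\bar X_n\stackrel{a.s.}{\longrightarrow}\mu=\mathbf{E}[X_n]$, completing the proof. I expect the variance estimate in the display above to be the main obstacle: making the interchange of summation and expectation rigorous and pinning down the constant $C$ is the single place where the integrability hypothesis is genuinely consumed. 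As an alternative that avoids Kolmogorov's inequality entirely, I note that \emph{Etemadi's argument}---truncation, Chebyshev's inequality along a geometric subsequence $k_n=\lfloor\alpha^n\rfloor$, Borel--Cantelli, and a monotonicity sandwich to fill in the remaining indices---delivers the same conclusion under only pairwise independence and could be substituted in place of the one-series/Kronecker route.
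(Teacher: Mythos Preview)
Your argument is the standard truncation proof of Kolmogorov's strong law and is essentially correct. However, the paper does not prove this lemma at all: it is stated in the appendix as a cited classical fact from \cite{williams1991probability} and used as a black box in the proofs of Lemmas~\ref{lem-delta-0-all-est-326} and~\ref{lem-seven}. So there is no ``paper's own proof'' to compare against; you have simply supplied what the authors chose to outsource to a reference.

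One small wording issue: in your final step you write ``$Y_k\to X_1$ pointwise with $|Y_k|\le|X_1|$,'' but $Y_k=X_k\mathbf{1}\{|X_k|\le k\}$ is a function of $X_k$, not of $X_1$, so this pointwise claim is false as stated. What you actually need (and presumably intend) is the distributional identity $\mathbf{E}[Y_k]=\mathbf{E}\bigl[X_1\mathbf{1}\{|X_1|\le k\}\bigr]$, after which dominated convergence on this single integrand gives $\mathbf{E}[Y_k]\to\mu$. With that correction the proof is complete.
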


\section{Evaluations based on Gowalla Dataset and Brightkite dataset}\label{appendix-sec-evaluation}
In this section, we provide the evaluations of the adopted degree distribution model and  population-distance-based model using Gowalla and Brightkite users datasets \cite{Hossmann2011,Brightkite2007}.

\subsection{Gowalla and Brightkite Datasets}
Gowalla and Brightkite were both created in 2007. They were once two location-based social networking service providers where users shared their locations by ``checking-in" function, \cite{brightkite2010,Hossmann2011,Brightkite2007}. For Gowalla, the relationship network consists of $196,591$ nodes and $950,327$  undirected edges.
The Gowalla users' dataset in \cite{Hossmann2011} collected
a total of $6,442,890$ checkins of these users over the period from February 2009 to October 2010.
It provides each user with the incoming  and outgoing follower lists as well as the latitude and longitude.
For Brightkite, the relationship network consists of $58,228$ nodes and $214,078$ directed edges.
The Brightkite users' dataset in \cite{Brightkite2007} collected a total of $4,491,143$ checkins of these users over the period from April 2008 to October 2010. It provides each user with the incoming and outgoing follower lists.

In our evaluations, because of the deficiency of users' data in Asia and other areas, for  Gowalla users' dataset \cite{Hossmann2011}, we extracted $52,161$ users who locate in North America to improve the accuracy and decrease the computation complexity.
Figure \ref{fig-location-distribution} shows the geographical distribution of users in North America. $Y$ and $X$ represent the latitude and longitude of users' locations, respectively.

\begin{figure}[!h]
\begin{center}
\scalebox{0.55}{\includegraphics{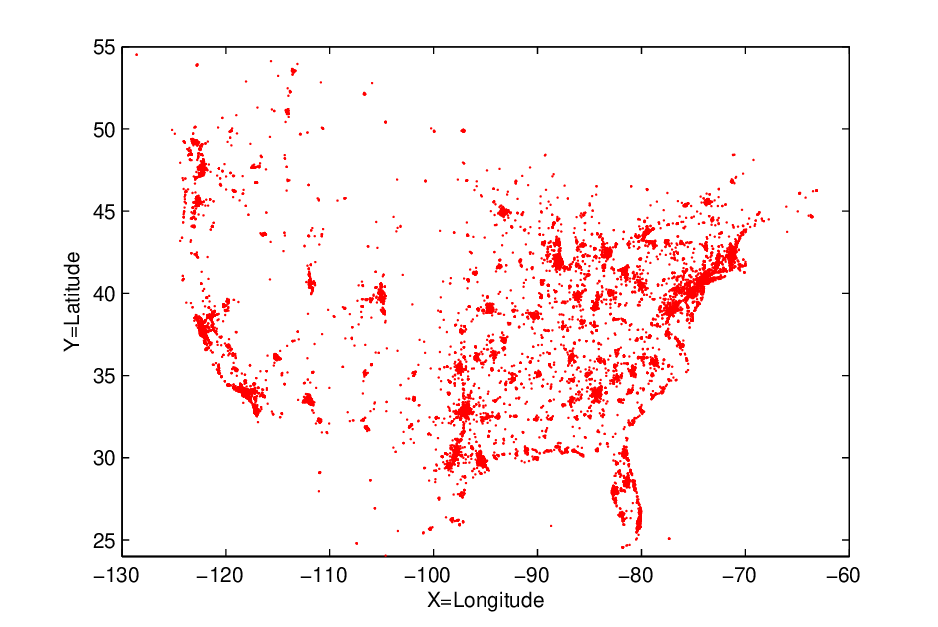}}
\caption{The geographical position of Gowalla Users in North America.}\label{fig-location-distribution}
\end{center}
\end{figure}

\subsection{Evaluation of Degree Distribution}\label{appendix-ddobu234324}
In this work, we assume that the number
of followers of a particular node $v_k\in \mathcal{V}$, denoted by $q_k$, follows a Zipf's
distribution \cite{manning1999foundations}, i.e.,
\[
\Pr(q_k=l)={\left(\sum\nolimits_{j=1}^{n-1}j^{-\gamma}\right)^{-1}} \cdot {l^{-\gamma}}.
\]
Next, we validate the Zipf's degree distribution of social relationships by investigating the  negative linear correlation between
\begin{center}
$Y:=  N_{out}(K_{out})$ and $X:=  K_{out}$,
\end{center}
 where
$K_{out}$ represents an outgoing degree, and $N_{out}(K_{out})$ denotes the number of the users with an outgoing degree $K_{out}$.

\begin{figure*}[!t]
\begin{flushleft}
\begin{tabular}{cc}
\scalebox{0.6}{\includegraphics{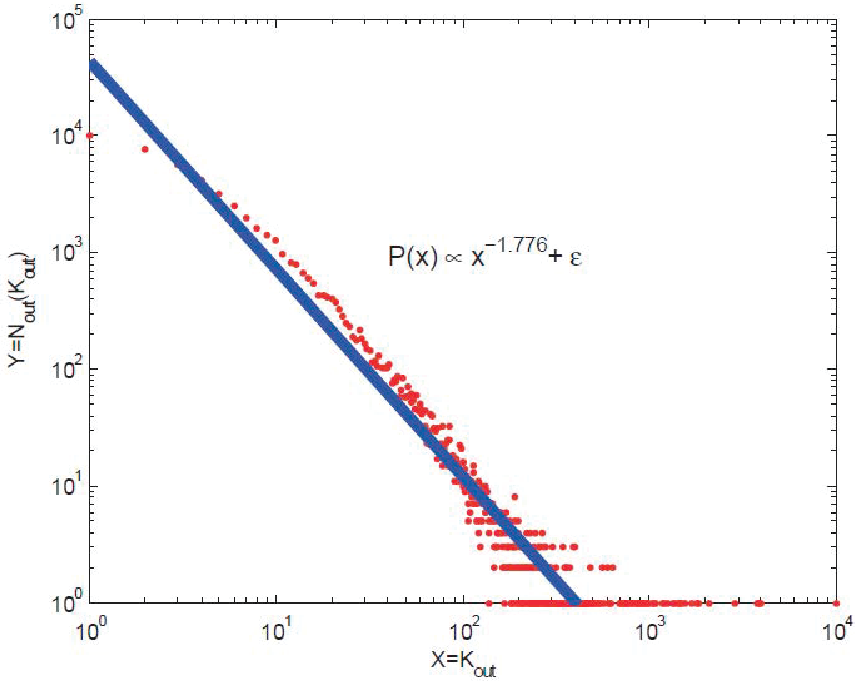}}&
\scalebox{0.6}{\includegraphics{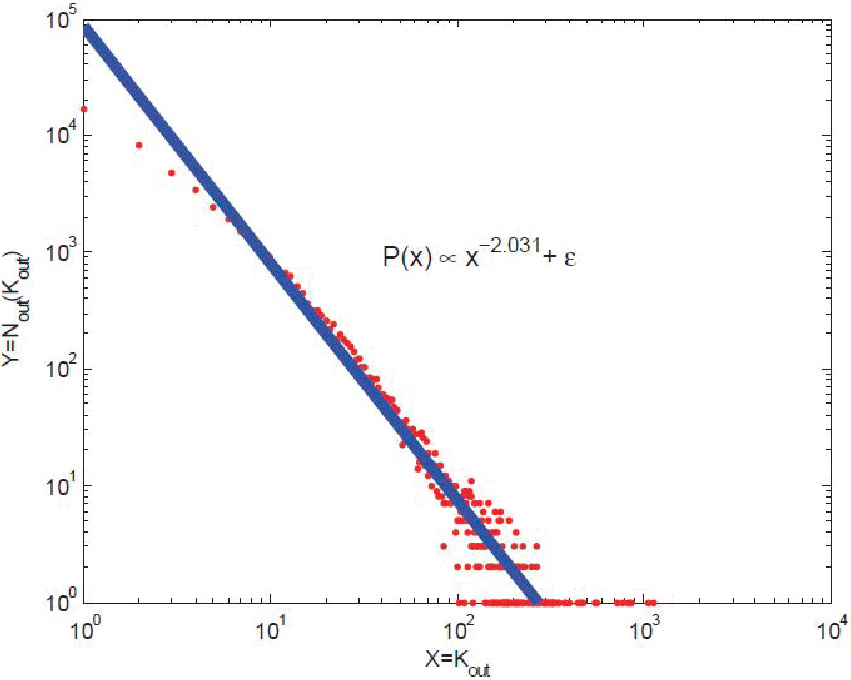}}
 \\
(a)  & (b)
\end{tabular}
\end{flushleft}
\caption{(a) is the validation for social degree distribution of Gowalla users, while (b) is of Brightkite users from \cite{mobihoc2014}.} \label{fig-evaluation-outdegree}
 \end{figure*}

In Gowalla and Brightkite datasets \cite{Hossmann2011,Brightkite2007}, the correlation between $Y$ and $X$ is described as in Figure \ref{fig-evaluation-outdegree}.
It shows that the correlation is approximately to  be a line segment with negative slope, which basically matches our proposed model.

\subsection{Evaluation of Population-Distance-Based Model}\label{appendix-vopbm23432432}
We discretize the network area $\mathcal{O}$ into a lattice consisting of  $120,000$ points.
 Each point acts as a candidate \emph{anchor point}.
We denote this lattice  by $\mathcal{O}_d$.

Let $d(u,p)$ denote the distance between user $u$ and a random position/cell $p$ in $\mathcal{O}_d$ that serves as a candidate \emph{anchor point}. Let $\mathcal{D}(u,p)$
denote the disk centered at $u$ with a radius $d(u,p)$. Let
$N(u,p)$ denote the number of nodes in the disk $\mathcal{D}(u,p)$. Let $v_p$ denote  the closest user to the candidate \emph{anchor point} $p$.
Furthermore, we define a variable \[\mathbf{I}(u,v_p)=\textbf{1}\cdot  \{ v_p\mbox{ is a follower of } u \}.\]

\begin{figure*}[!t]
\begin{flushleft}
\begin{tabular}{cc}
\scalebox{0.355}{\includegraphics{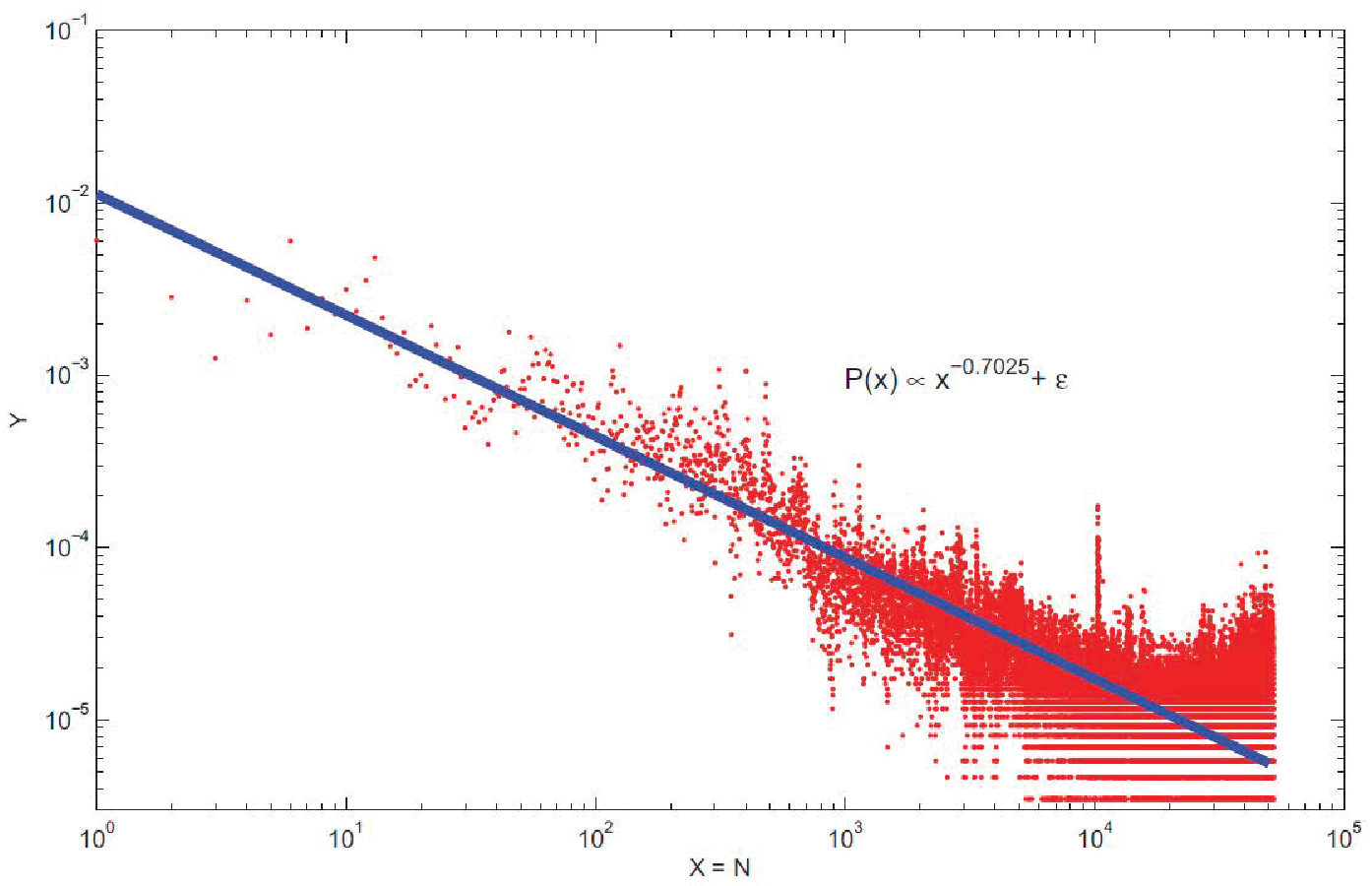}}&
\scalebox{0.35}{\includegraphics{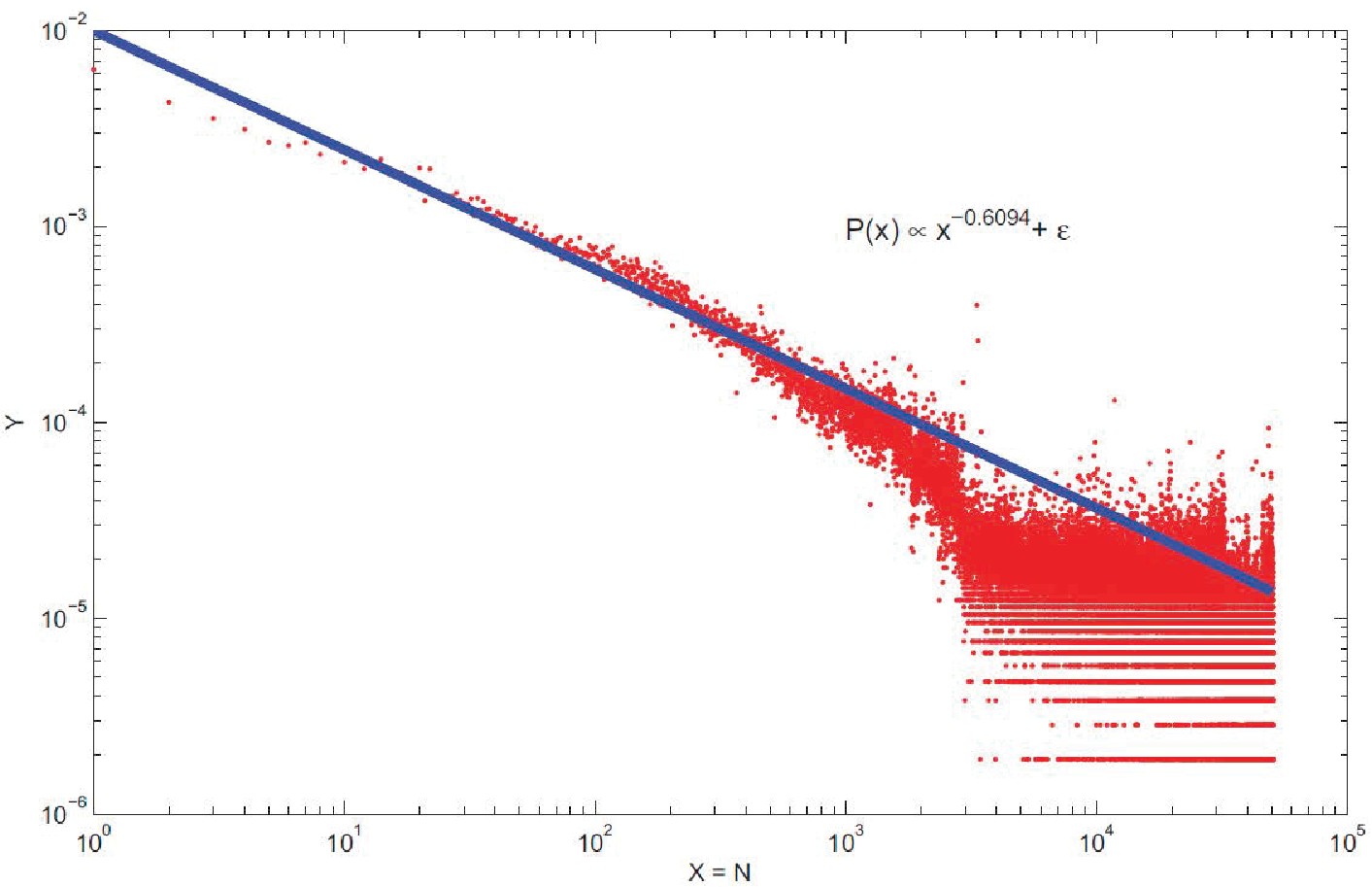}}
 \\
(a)  & (b)
\vspace{-0.2in}
\end{tabular}
\end{flushleft}
\caption{(a) is the validation for Population-Distance-Based social probability distribution of Gowalla users, while (b) is of Brightkite users.
 } \label{fig-evaluation-formadistri}
 \vspace{-0.1in}
 \end{figure*}

We validate the geographical distribution of relationships by investigating the \emph{negative linear correlation} between $Y$ and $X$, where
$X$ denotes a number of nodes contained in a certain disk, and
$$
Y:=  \frac{1}{|\mathcal{E}| }\cdot{\sum\limits_{ \mathbf{I}(u,v_p)=1} \textbf{1}\cdot  \{ N(u,p)=N \}} ,
$$
with $\mathcal{E}$ denoting  the set of all social links, respectively.


In real-world dataset, the candidate \emph{anchor points} located on the sea or in the desert are quite far from their nearest users, which leads to a high probability of outer sphere of users chosen to be a follower of $v_k$.
To get rid of these candidate \emph{anchor points}  which are apart from the corresponding users, we set a threshold distance $d_f$ to filter such positions as outliers. In this Gowalla dataset \cite{Hossmann2011}, $d_f$ is set to be $200$ kilometers, which makes the positions $p$ cover most of the land and filter the ocean area simultaneously.


In the Gowalla and Brightkite datasets  \cite{Hossmann2011,Brightkite2007}, the correlation between $Y$ and $X$ is described by
Figure \ref{fig-evaluation-formadistri}.
It shows that the correlation tendency is approximated very coarsely to a line segment with a negative slope.

The experimental result also basically validates the model, although it does not perfectly match.
The main reason of mismatch possibly lies in the facts as follows:
(1) The locations of users in these datasets are actually the positions where they check-in, instead of the place where they usually stay.
(2) Based on these dataset, more than $90\%$ results fall within the cases with $X>10^{3}$. The accumulation of experimental errors here leads to the ``bloated" tails in the evaluation figures.

\section{Complete Main Results}\label{complete-main-reslts}
Here, we summarize the main results of a complete form in the following table:

\begin{landscape}
\begin{table}[!l]
\renewcommand{\arraystretch}{1.8}
\caption{The Order of Transport Complexity,  $G\left(\beta, \gamma, \varphi\right)$}
\label{G-tab-total}
\scalebox{0.8}
{
\begin{tabular}{|p{1.8cm}||p{4.7cm}|p{4.7cm}|p{4.7cm}|p{4.7cm}|p{4.7cm}|}

\hline
$\varphi$ $\backslash$ $\beta$ & $\beta>2$ & $\beta=2$ & $ 1<\beta<2 $ & $\beta=1$ & $0\leq\beta<1$ \\
\hline
\hline

$\varphi>2$ &
 $ \Theta(n) , ~~ \gamma\geq0 $ &
$ \Theta(n \cdot \log n) , ~~ \gamma\geq0 $ &
$ \Theta(n^{2-\frac{\beta}{2}}) , ~~ \gamma\geq0 $ &
$ \Theta(n^{\frac{3}{2}} / \sqrt{\log n}), ~~ \gamma\geq0 $ &
$ \Theta(n^{\frac{3}{2}}) , ~~ \gamma\geq0 $ \\
\hline

$\varphi=2$ & $\left\{
  \begin{array}{ll}
\Theta(n)  , ~~  \gamma >1;\\
\Theta(n \cdot \log n) , ~~ 0\leq\gamma\leq1.\\
  \end{array}
\right. $ &
 $ \Theta(n \cdot \log n) $  , ~~ $ \gamma\geq0. $  &
 $ \Theta(n^{2-\frac{\beta}{2}})  , ~~ \gamma\geq0. $  &
 $ \Theta(n^{\frac{3}{2}} / \sqrt{\log n}) $ , ~~ $ \gamma\geq0. $ &
 $ \Theta(n^{\frac{3}{2}})  , ~~ \gamma\geq0. $  \\
\hline

$\frac{3}{2}<\varphi<2$ & $\left\{
  \begin{array}{ll}
\Theta(n) , ~~ \gamma>3-\varphi;  \\
\Theta(n \cdot \log n) , ~~  \gamma=3-\varphi; \\
\Theta(n^{4-\gamma-\varphi})  , ~~ 1<\gamma<3-\varphi; \\
\Theta(n^{3-\varphi} / \log n) ,  ~~ \gamma=1; \\
\Theta(n^{3-\varphi})  , ~~ 0\leq\gamma<1.
  \end{array}
\right. $ &
$\left\{
  \begin{array}{ll}
\Theta(n \cdot \log n) , ~~ \gamma\geq3-\varphi; \\
\Theta(n^{4-\gamma-\varphi})  , ~~ 1<\gamma<3-\varphi; \\
\Theta(n^{3-\varphi} / \log n) , ~~ \gamma=1; \\
\Theta(n^{3-\varphi})  , ~~ 0\leq\gamma<1.
  \end{array}
\right. $ &
$
\Theta(n^{2-\frac{\beta}{2}}) , ~~ \gamma\geq0. $ &
$\Theta(n^{\frac{3}{2}} / \sqrt{\log n}) , ~~  \gamma\geq0. $ &
$ \Theta(n^{\frac{3}{2}}) , ~~ \gamma\geq0. $ \\
\hline

$\varphi=\frac{3}{2}$ & $\left\{
  \begin{array}{ll}
\Theta(n)  ,  ~~ \gamma >\frac{3}{2};\\
\Theta(n \cdot \log n) , ~~ \gamma=\frac{3}{2};  \\
\Theta(n^{\frac{5}{2}-\gamma}) , ~~ 1<\gamma<\frac{3}{2};  \\
\Theta(n^{\frac{3}{2}} / \log n) , ~~  \gamma=1;  \\
\Theta(n^{\frac{3}{2}})  ,  ~~ 0\leq\gamma<1.\\
  \end{array}
\right. $ &
$\left\{
  \begin{array}{ll}
\Theta(n \cdot \log n)  ,  ~~  \gamma \geq\frac{3}{2};\\
\Theta(n^{\frac{5}{2}-\gamma}) , ~~ 1<\gamma<\frac{3}{2};  \\
\Theta(n^{\frac{3}{2}} / \log n) , ~~ \gamma=1;  \\
\Theta(n^{\frac{3}{2}})  , ~~  0\leq\gamma<1.\\
  \end{array}
\right. $ &
$\left\{
  \begin{array}{ll}
\Theta(n^{2-\frac{\beta}{2}})  ,  ~~   \gamma \geq\frac{3}{2};\\
\Theta(n^{\frac{5}{2}-\gamma}) ,  ~~ 1<\gamma<\frac{3}{2};  \\
\Theta(n^{\frac{3}{2}} / \log n) , ~~ \gamma=1;  \\
\Theta(n^{\frac{3}{2}})  ,  ~~ 0\leq\gamma<1.\\
  \end{array}
\right. $ &
$\left\{
  \begin{array}{ll}
\Theta(n^{\frac{3}{2}} / \sqrt{\log n})  ,  ~~   \gamma >1;\\
\Theta(n^{\frac{3}{2}} \cdot \sqrt{\log n})  , ~~  0\leq\gamma\leq1.\\
  \end{array}
\right. $ &
$\left\{
  \begin{array}{ll}
\Theta(n^{\frac{3}{2}})  , ~~   \gamma >1;\\
\Theta(n^{\frac{3}{2}} \cdot \log n)  , ~~  0\leq\gamma\leq1.\\
  \end{array}
\right. $ \\
\hline

$1<\varphi<\frac{3}{2}$ & $\left\{
  \begin{array}{ll}
\Theta(n)  , ~~   \gamma>3-\varphi; \\
\Theta(n \cdot \log n) , ~~ \gamma=3-\varphi; \\
\Theta(n^{4-\gamma-\varphi}) , ~~  1<\gamma<3-\varphi;  \\
\Theta(n^{3-\varphi} / \log n) , ~~ \gamma=1; \\
\Theta(n^{3-\varphi})  ,  ~~ 0\leq\gamma<1.\\
  \end{array}
\right. $ &
$\left\{
  \begin{array}{ll}
\Theta(n \cdot \log n)  , ~~  \gamma\geq3-\varphi; \\
\Theta(n^{4-\gamma-\varphi}) , ~~ 1<\gamma<3-\varphi;  \\
\Theta(n^{3-\varphi} / \log n) , ~~ \gamma=1; \\
\Theta(n^{3-\varphi})  , ~~  0\leq\gamma<1.\\
  \end{array}
\right. $ &
$\left\{
  \begin{array}{ll}
\Theta(n^{2-\frac{\beta}{2}})  , ~~   \gamma\geq3-\varphi; \\
\Theta(n^{4-\gamma-\varphi}) , ~~  1<\gamma<3-\varphi;  \\
\Theta(n^{3-\varphi} / \log n) , ~~  \gamma=1; \\
\Theta(n^{3-\varphi})  , ~~ 0\leq\gamma<1.\\
  \end{array}
\right. $ &
$\left\{
  \begin{array}{ll}
\Theta(n^{\frac{3}{2}} / \sqrt{\log n})  ,  ~~ \gamma>\frac{5}{2}-\varphi; \\
\Theta(n^{\frac{3}{2}} \cdot  \sqrt{\log n}) , ~~  \gamma=\frac{5}{2}-\varphi; \\
\Theta(n^{4-\gamma-\varphi}) , ~~ 1<\gamma<\frac{5}{2}-\varphi;  \\
\Theta(n^{3-\varphi} / \log n) , ~~ \gamma=1; \\
\Theta(n^{3-\varphi})  ,  ~~ 0\leq\gamma<1.\\
  \end{array}
\right. $ &
$\left\{
  \begin{array}{ll}
\Theta(n^{\frac{3}{2}})  ,  ~~ \gamma>\frac{5}{2}-\varphi; \\
\Theta(n^{\frac{3}{2}} \cdot \log n) , ~~ \gamma=\frac{5}{2}-\varphi; \\
\Theta(n^{4-\gamma-\varphi}) ,  ~~ 1<\gamma<\frac{5}{2}-\varphi;  \\
\Theta(n^{3-\varphi} / \log n) , ~~ \gamma=1; \\
\Theta(n^{3-\varphi})  ,  ~~  0\leq\gamma<1.\\
  \end{array}
\right. $ \\
\hline

$\varphi=1$ & $\left\{
  \begin{array}{ll}
\Theta(n)  ,  ~~  \gamma\geq 2;\\
\Theta(n^{3-\gamma} / \log n) , ~~ 1<\gamma<2;  \\
\Theta(n^{2} / (\log n)^2) , ~~   \gamma=1;  \\
\Theta(n^{2} / \log n)  , ~~  0\leq\gamma<1.\\
  \end{array}
\right. $ &
$\left\{
  \begin{array}{ll}
\Theta(n \cdot \log n)  ,  ~~  \gamma\geq 2;\\
\Theta(n^{3-\gamma} / \log n) , ~~ 1<\gamma<2;  \\
\Theta(n^{2} / (\log n)^2) , ~~  \gamma=1;  \\
\Theta(n^{2} / \log n)  , ~~  0\leq\gamma<1.\\
  \end{array}
\right. $ &
$\left\{
  \begin{array}{ll}
\Theta(n^{2-\frac{\beta}{2}})  , ~~   \gamma>\frac{3}{2};\\
\Theta(n^{3-\gamma} / \log n) , ~~ 1<\gamma\leq\frac{3}{2};  \\
\Theta(n^{2} / (\log n)^2) , ~~  \gamma=1;  \\
\Theta(n^{2} / \log n)  , ~~  0\leq\gamma<1.\\
  \end{array}
\right. $ &
$\left\{
  \begin{array}{ll}
\Theta(n^{\frac{3}{2}} / \sqrt{\log n})  , ~~   \gamma\geq \frac{3}{2};\\
\Theta(n^{3-\gamma} / \log n) ,  ~~ 1<\gamma<\frac{3}{2};  \\
\Theta(n^{2} / (\log n)^2) , ~~  \gamma=1;  \\
\Theta(n^{2} / \log n)  , ~~  0\leq\gamma<1.\\

  \end{array}
\right. $ &
$\left\{
  \begin{array}{ll}
\Theta(n^{\frac{3}{2}})  , ~~   \gamma\geq \frac{3}{2};\\
\Theta(n^{3-\gamma} / \log n) , ~~  1<\gamma<\frac{3}{2};  \\
\Theta(n^{2} / (\log n)^2) , ~~  \gamma=1;  \\
\Theta(n^{2} / \log n)  ,  ~~  0\leq\gamma<1.\\
  \end{array}
\right. $ \\
\hline

$0\leq\varphi<1$ & $\left\{
  \begin{array}{ll}
\Theta(n)  ,  \gamma>2;\\
\Theta(n \cdot \log n) , ~~  \gamma=2;  \\
\Theta(n^{3-\gamma}) , ~~  1<\gamma<2;  \\
\Theta(n^{2} / \log n) , ~~  \gamma=1;  \\
\Theta(n^{2})  , ~~  0\leq\gamma<1.\\
  \end{array}
\right. $ &
$\left\{
  \begin{array}{ll}
\Theta(n \cdot \log n)  ,  ~~   \gamma\geq2;\\
\Theta(n^{3-\gamma}) , ~~  1<\gamma<2;  \\
\Theta(n^{2} / \log n) , ~~  \gamma=1;  \\
\Theta(n^{2})  ,  ~~  0\leq\gamma<1.\\
  \end{array}
\right. $ &
$\left\{
  \begin{array}{ll}
\Theta(2-\frac{\beta}{2})  , ~~  \gamma\geq2;\\
\Theta(n^{3-\gamma}) , ~~  1<\gamma<2;  \\
\Theta(n^{2} / \log n) , ~~  \gamma=1;  \\
\Theta(n^{2})  ,  ~~  0\leq\gamma<1.\\
  \end{array}
\right. $ &
$\left\{
  \begin{array}{ll}
\Theta(n^{\frac{3}{2}} / \sqrt{\log n})  , ~~ \gamma>\frac{3}{2};\\
\Theta(n^{\frac{3}{2}} \cdot \sqrt{\log n}) ,  ~~ \gamma=\frac{3}{2};  \\
\Theta(n^{3-\gamma}) , ~~ 1<\gamma<\frac{3}{2};  \\
\Theta(n^{2} / \log n) ,  ~~ \gamma=1;  \\
\Theta(n^{2})  ,  ~~  0\leq\gamma<1.\\
  \end{array}
\right. $ &
$\left\{
  \begin{array}{ll}
\Theta(n^{\frac{3}{2}})  ,  ~~  \gamma>\frac{3}{2};\\
\Theta(n^{\frac{3}{2}} \cdot \log n) , ~~ \gamma=\frac{3}{2};  \\
\Theta(n^{3-\gamma}) ,  ~~ 1<\gamma<\frac{3}{2};  \\
\Theta(n^{2} / \log n) ,  ~~ \gamma=1;  \\
\Theta(n^{2})  ,  ~~  0\leq\gamma<1.\\
  \end{array}
\right. $ \\
\hline
\end{tabular}
}
\end{table}
\end{landscape}

\end{document}